\def\sideremark#1{\ifvmode\leavevmode\fi\vadjust{\vbox to0pt{\vss 
      \hbox to 0pt{\hskip\hsize\hskip1em           
 \vbox{\hsize2cm\tiny\raggedright\pretolerance10000
 \noindent #1\hfill}\hss}\vbox to8pt{\vfil}\vss}}}%
\newtheorem{thm}{Theorem}[section]
\newtheorem{cor}[thm]{Corollary}
\newtheorem{lem}[thm]{Lemma}
\newtheorem{prop}[thm]{Proposition}
\theoremstyle{definition}
\newtheorem{defn}[thm]{Definition}
\newtheorem{example}[thm]{Example}
\theoremstyle{remark}
\newtheorem{rem}[thm]{Remark}
\numberwithin{equation}{subsection}
\def\d{\mathrm{d} }
\begin{document}

\title{\textsc{Energy and electromagnetism of a differential $k$-form}}

\author{Navarro, J.\thanks{Department of Mathematics, University of Extremadura,
Avda. Elvas s/n, 06071, Badajoz, Spain. {\it Email address:} navarrogarmendia@unex.es \newline
The first author has been partially supported by Junta de Extremadura
and FEDER funds.}
\and  Sancho,
J.B.}


\date{\today}

\maketitle

\begin{abstract}
Let $X$ be a smooth manifold of dimension $1+n$ endowed with a Lorentzian metric $g$. The 
energy tensor of a 2-form $F$ is locally defined as $T_{ab}\, := \, -   \left( {F_a}^{i} F_{bi} - 
\frac{1}{4} \, F^{ij} F_{ij} g_{ab} \right) $.

In this paper we characterize this tensor as the only 2-covariant natural tensor associated to a Lorentzian metric and a 
2-form that is independent of the unit of scale and satisfies certain condition on its divergence. 
This characterization is motivated on physical grounds, and can be used to justify the Einstein-Maxwell field equations.

More generally, we characterize in a similar manner the energy tensor associated to a differential form of arbitrary order $k$.

Finally, we develop a generalized theory of electromagnetism where 
charged particles are not punctual, but of an arbitrary fixed dimension $p$. In this theory, the electromagnetic 
field $F$ is a differential form of order $2+p$ and its electromagnetic energy tensor is precisely the energy tensor associated 
to $F$.

\bigskip

\noindent \emph{Key words and phrases:} Energy tensors, natural tensors, $p$-form electrodynamics, $p$-branes.

\medskip

\noindent MSC: 53A55, 83C40, 83E15, 81T30

\end{abstract}


\bigskip

\section*{Introduction}

Let $(X,g)$ be a relativistic spacetime of dimension $1+n$; that is, $X$ is a smooth manifold of dimension $1+n$ and $g$ 
is a Lorentzian metric of signature $(+,-,\overset{n}{\dots},-)$. 

An electromagnetic field on $X$ is represented by a differential 
$2$-form $F$, and its electromagnetic energy tensor $T$ is a $2$-covariant tensor defined in a local chart by the formula:
$$ T_{ab}\, := \, -   \left( {F_a}^{i} F_{bi} - \frac{1}{4} \, F^{ij} F_{ij} g_{ab} \right) \ . $$

The main purpose of this paper is to prove the following characterization of this energy tensor:

\medskip
\noindent\textbf{Theorem:}
{\it The energy tensor is the only $2$-covariant tensor $T=T(g,F)$ naturally associated to a Lorentzian metric $g$ and a $2$-form $F$, satisfying the following properties:\smallskip

1. $T$ is independent of the unit of scale; that is, $\, T(\lambda^2g,\lambda F)\,=\, T(g,F)$ for any $\lambda >0$.\smallskip

2. At any point, $\, F_x=0 \ \Rightarrow \ T_x=0$.\smallskip

3. If $\,\d F=0\,$ then $\, \mathrm{div}\, T=-i_{\partial F}F$.}
\bigskip

More generally, the energy tensor can be defined for differential $k$-forms 
(\cite{Superenergy}), and we characterize these tensors in a similar manner (Theorem \ref{caracterizacion2}).

Let us briefly explain the role of the above theorem in order to physically motivate the definition of the energy tensor of 
an electromagnetic field. In General Relativity, 
the matter content of spacetime $X$ is represented by a symmetric $2$-covariant tensor $T_{\text{m}}$ (the matter stress-energy tensor) and, in absence of electric charges, the mass-energy and impulse conservation laws are encoded in the equation:
$$ \mathrm{div}\, T_{\text{m}} \,=\, 0 \ .$$ 

Nevertheless, when dealing with charged matter, the Lorentz force law imposes: $$\mathrm{div}\, T_{\text{m}} \,=\, i_JF 
 \,=\, i_{\partial F}F$$
where $J$ is the charge-current vector field and $\,\partial F=J^*$ because of the second Maxwell equation (in the interior product, $\partial F$ stands for the vector metrically equivalent; see our notations in Section \ref{Notations}). Therefore, in order to have
the aforementioned conservation laws, it is necessary to assume that, apart from the stress-energy tensor $T_{\text{m}}$ of the matter distribution, 
there also exists some kind of energy associated to the electromagnetic field itself, represented by some $2$-covariant tensor $T_{\text{elm}}$, such that:
$$ \mathrm{div}\, (T_{\text{m}} + T_{\text{elm}})\, =\, 0 \ .$$

Of course, this equality implies $\,\mathrm{div}\, T_{\text{elm}} = - i_{\partial F}F$. 
Since $\d F=0$ (first Maxwell equation), the tensor $T_{\text{elm}}$ has to satisfy condition ${\it 3}$ of the Theorem.

As concerns to the first hypothesis, observe that if the metric $g$ is changed by a 
proportional one, $\overline g=\lambda^2g$ ( $\lambda \in \mathbb{R}^+$), then the proper time of the trajectory of any particle 
is multiplied by the factor $\lambda$. In other words, replacing the metric $g$ by $\overline g=\lambda^2g$ amounts to a change in the time unit. This change modifies the other units of length, mass and charge, since we assume 
they are chosen in such a 
way that the universal constants (light velocity, gravitational constant and Coulomb's constant) are all equal to $1$. It is easy to check that a change of the time unit $\overline g=\lambda^2g$ (with the corresponding change in the other units) implies a modification of the type $\,\overline F=\lambda F\,$ in the mathematical representation of the electromagnetic field, while the 
matter tensor remains invariable: $\overline T_\text{m}=T_\text{m}$. 
This last equality and the equations $\,\text{div}\,(T_{\text{m}}+T_{\text{elm}})=0=
\text{div}\,(\overline T_{\text{m}}+\overline{T}_{\text{elm}})\,$, imply that the electromagnetic energy tensor also
stands invariable: $\overline{T}_{\text{elm}}=T_{\text{elm}}$. That is to say, the 
tensor $T_{\text{elm}}$ has to satisfy condition \textit{1} of the Theorem.

Finally, condition \textit{2} of the Theorem states that the electromagnetic energy is null wherever the field is null.

Summing up, these are three properties that have to be satisfied by any physically reasonable definition of electromagnetic energy tensor, and our result proves that the choice is then uniquely determined.

This problem of characterizing the electromagnetic energy tensor is classical and has already been studied in 
the literature (\cite{Anderson}, \cite{Kerrighan1}, \cite{Kerrighan2}, \cite{Lovelock}, \cite{Lovelock2}). The closest result to 
our statement is Kerrighan's (\cite{Kerrighan2}), where the tensor $T(g,F)$ is assumed to be symmetric and its coefficients are assumed to be functions of the coefficients of $g$ and $F$ (so tensors using higher derivatives of $g$ and $F$ are not considered).

Both restrictions are removed in our theorem where, instead, we require independence of the unit of scale. This is a physically meaningful condition which, in spite of its innocent appearance, turns out to be very restrictive. Some examples may illustrate this point: 

-- The Levi-Civita connection is the only linear connection naturally associated to a metric that is independent of 
the unit of scale (Epstein, \cite{Epstein}).

-- The Einstein tensor is the only, up to constant factors, $2$-covariant natural tensor associated to a metric 
that is divergence-free and independent of the unit of scale (Navarro-Sancho, \cite{Einstein}).

-- The Pontryagin forms are the only differential forms naturally associated to a metric 
that are independent of the unit of scale (Gilkey, \cite{Gilkey}).
\smallskip

Let us summarize the content of the article.

We begin with a preliminary section where we recall the definition of energy tensor associated to a 
differential $k$-form $\omega$ and the main properties that we use to characterize it.

In the following section, the problem of computing natural tensors associated to a Lorentzian metric and a $k$-form $\omega$, subject 
to a certain homogeneity condition, is reduced to a problem of representations of the orthogonal group. The main result (Theorem \ref{StredderSlovak}) 
is similar to other 
known results  (\cite{SlovakJournal}, \cite{SlovakSemi}, \cite{Stredder}), but with a slightly different homogeneity condition.

Next, we determine all the $2$-covariant natural tensors $T(g,\omega)$ that are independent of the unit of scale. As a consequence, 
it follows the announced characterization of the energy tensors (Theorem \ref{caracterizacion2}).

Finally, the existence of an energy tensor associated to a differential form of arbitrary order 
suggests the question of a possible physical interpretation for it. 
In the last section, we consider a generalized theory of electromagnetism for charged $p$-branes, introduced by Henneaux and Teitelboim (\cite{Henneaux}), where 
the electromagnetic field $F$ is a differential form of order $2+p$. 
We extend this theory up to the point of including fluids of charged $p$-branes; the corresponding Maxwell-Einstein equations require an electromagnetic energy tensor, which turns out to be  the energy tensor associated to the form $F$.\bigskip

\section{Preliminaries}\label{preliminares}

Throughout the paper, let $(X,g)$ be a Lorentzian manifold of dimension $1+n$, whose metric has signature $(+ , - , \overset{n}{\dots} , - )$. We assume $X$ is oriented, with volume form $\d X$, and time oriented.

\subsection{Notations and conventions}\label{Notations}

\smallskip

 Given a $q$-vector $D_1 \wedge \ldots \wedge D_q$ and a differential $k$-form $\omega$, with $q \leq k$, we write:
 $$ i_{D_1 \wedge \ldots \wedge D_q} \omega :=\, i_{D_q} \ldots i_{D_1} \omega  = \omega ( D_1 , \ldots , D_q , \_ \, , \ldots , \_ \, ) \ .$$

Analogously, if $\omega$ is a $k$-form and $\bar{\omega}$ is a $q$-form, with $q \leq k$, we write:
$$ i_{\bar{\omega}} \omega  :=\, i_{\bar{\omega}^* } \omega $$ where $\bar{\omega}^* $ is the $q$-vector metrically equivalent to $\bar{\omega}$.

With these notations, the metric induced on the bundle of $k$-forms is:
$$ \langle \omega , \bar{\omega} \rangle :=\, i_{\omega} \bar{\omega}\,=\, 
\frac{1}{k!}\,\omega^{j_1\dots j_k}\bar\omega_{j_1\dots j_k}  \ . $$

\smallskip
The Hodge star is the linear isomorphism $* \colon \Lambda^k T^* X \to \Lambda^{1+n-k} T^*X $ defined as:
 $$ * \omega :=\, i_{\omega} \d X \  $$
and, with these conventions, it holds: $\, * * \omega \,=\, (-1)^{(k+1)n} \omega$.

\smallskip
The codifferential $\partial \colon \Omega^k(X) \to \Omega^{k-1} (X)$ is the following differential operator:
 $$ \partial := (-1)^{(1+n)k} * \d\, * \ , \quad  \mbox{ or, equivalently,} \quad * \partial :=\, (-1)^k \d * \ . $$ 
In a local chart:
$$ (\partial \omega)_{i_1\dots i_{k-1}}\, =\, -\, \nabla^a\omega_{ai_1\dots i_{k-1}}\  . $$


\smallskip

\begin{rem}\label{FormulasAuxiliares}
 Later, we will need  the following formulae for the components of the $1$-forms $\, i_{\partial \omega} \omega\,$ and $\, i_\omega\d\omega$:
 $$\begin{aligned}
 (i_{\partial\omega}\omega)_b\,=&\, \frac{(-1)^k}{(k-1)!}\nabla_a\omega^{ai_2\dots i_k}\omega_{bi_2\dots i_k}\ ,\\
(i_\omega\d\omega)_b\,=&\,\frac{(-1)^k}{k!} \omega^{i_1\dots i_k}\nabla_b\omega_{i_1\dots i_k}+\frac{1}{(k-1)!}\omega^{i_1\dots i_k}\nabla_{i_1}\omega_{i_2\dots i_kb}\ .
\end{aligned}$$
\end{rem}

\subsection{Energy tensor of a differential $k$-form}

Let $\omega$ be a differential $k$-form on $X$.

\smallskip

\begin{defn}
Let $U$ be an observer at a point $x$ (that is, $U$ is a unitary timelike vector oriented to the future). Let us consider an orthonormal frame 
$\, ( D_0 = U, D_1 , \ldots , D_n)\,$ of $T_xX$ and the corresponding dual base $\, (\theta_0 = U^* , \theta_1 , \ldots , \theta_n)$.

In terms of this basis, the $k$-form $\omega$ decomposes as a multiple of $\theta_0$, called the \textbf{electric part} $E_U$, and
other terms without $\theta_0$, called the \textbf{magnetic part} $B_U$:
$$ \omega\, =\, E_U + B_U \,=\, (\mbox{terms with $\theta_0$} ) + ( \mbox{terms without $\theta_0$} ) \ . $$

In other words:
$$ E_U :=\, U^* \wedge i_U \omega \qquad , \qquad B_U :=\, i_U ( U^* \wedge \omega ) \ ,$$
so these $k$-forms $E_U, B_U$ depend on the observer $U$ but not on the chosen basis.

Moreover, as $E_U$ and $B_U$ are orthogonal:
$$ \langle \omega , \omega \rangle \ = \  \langle E_U , E_U \rangle + \langle B_U , B_U \rangle  \ . $$

These two addends have definite signs, that we modify to make them positive:
\begin{align*}
|E_U|^2 &:= (-1)^{k-1} \langle E_U , E_U \rangle \ = \ (-1)^{k-1} \langle i_U \omega , i_U \omega \rangle \\
|B_U|^2 &:= (-1)^k \quad  \langle B_U , B_U \rangle \ = \ (-1)^k \quad \langle U^* \wedge \omega , U^* \wedge \omega \rangle  \ .
\end{align*}
Hence,
$$  \langle \omega , \omega \rangle \  = \ (-1)^{k-1} \left( |E_U|^2 - |B_U|^2 \right) \   $$ 
and the right hand side of this equation does not depend on the observer.
\end{defn}

\smallskip

\begin{defn}
The \textbf{energy} of a differential $k$-form $\omega$ with respect to an observer $U$ is the smooth function:
$$ e(U) := \frac{1}{2} \left( |E_U|^2 + |B_U|^2 \right) \ . $$
\end{defn}

\smallskip

Unfolding the definitions, we see the energy is quadratic on $U$:
\begin{align*}
 e(U) &= \frac{1}{2} \left( |E_U|^2 + |B_U|^2 \right) = \frac{1}{2} \left( |E_U|^2 + (-1)^k \langle \omega , 
\omega \rangle + |E_U|^2 \right) \\
&= (-1)^{k-1} \left( \langle i_U \omega , i_U \omega \rangle - \frac{1}{2} \, \langle \omega , \omega \rangle \, \langle U , 
U \rangle \right) \ ,
\end{align*}
so we are led to consider the corresponding symmetric tensor:

\smallskip

\begin{defn}\label{energytensor}
The \textbf{energy tensor} of a differential $k$-form $\omega$ is the 2-covariant symmetric tensor $T$ defined as:
$$ (-1)^{k-1} \, T (D_1 , D_2) \ := \ \langle i_{D_1} \omega , i_{D_2} \omega \rangle - \frac{1}{2} \,
\langle \omega , \omega \rangle \, g (D_1 , D_2) \ .$$

This definition is made so that $T (U , U) = e(U)$ for every observer $U$. In a local chart,
$$ T_{ab}\, =\, \frac{(-1)^{k-1}}{(k-1)!}   \left( \omega_{a}^{\ i_2 \ldots i_k} \omega_{b i_2 \ldots i_k} - 
\frac{1}{2k}\, \omega^{i_1 \ldots i_k} \omega_{i_1 \ldots i_k} g_{ab} \right) \ . $$
\end{defn}

\smallskip

\begin{rem}
 These energy tensors are a particular case of the \textit{superenergy tensors} 
introduced by Senovilla (\cite{Superenergy}): the superenergy tensor associated to a differential $k$-form is 
precisely the energy tensor of Definition \ref{energytensor}.  

\end{rem}

\smallskip

Next, we quote the main property of the energy tensors, although we will not use it in this paper:

\begin{thm}[\cite{Superenergy}, Th. 4.1]
 The energy tensor $T$ of a $k$-form satisfies the dominant energy condition. 

In other words, for any pair $U_1, U_2$ of 
observers (unitary timelike vector fields oriented to the future), it holds:
$$ T (U_1 , U_2) \geq 0 \ .$$
\end{thm}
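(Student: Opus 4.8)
The plan is to prove the inequality pointwise and algebraically, working at a fixed point and in the vector space $T_xX$. Since $T(U,U)=e(U)=\tfrac12\left(|E_U|^2+|B_U|^2\right)\ge 0$ already holds for a single observer, the real content is the mixed inequality for $U_1\neq U_2$. My first reduction is to adapt the orthonormal frame: given two future-pointing unit timelike vectors $U_1,U_2$, I choose an orthonormal basis $(D_0,\dots,D_n)$ with $D_0=U_1$ and with $U_2$ lying in the timelike plane $\mathrm{span}(D_0,D_1)$, so that $U_2=\cosh\beta\,D_0+\sinh\beta\,D_1$ for some $\beta\in\mathbb{R}$. Then $T(U_1,U_2)=\cosh\beta\,T(D_0,D_0)+\sinh\beta\,T(D_0,D_1)$, and since $\cosh\beta>|\sinh\beta|$, the desired $T(U_1,U_2)\ge 0$ follows at once from the single estimate $|T(U_1,D_1)|\le T(U_1,U_1)$. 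Thus it suffices to show, for an arbitrary observer $U$ and an arbitrary unit spacelike $e\perp U$, that $|T(U,e)|\le T(U,U)$.

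The core step is to rewrite $T(U,e)$ through the electric--magnetic splitting relative to $U$. As $g(U,e)=0$, Definition \ref{energytensor} gives $(-1)^{k-1}T(U,e)=\langle i_U\omega,i_e\omega\rangle$. Writing $\omega=E_U+B_U$ and contracting by $e$, I get $i_e\omega=i_eE_U+i_eB_U$, where $i_eE_U=-\,U^*\wedge i_e(i_U\omega)$ carries a factor $U^*$, whereas $i_U\omega$ and $i_eB_U$ do not. Since, in the Lorentzian metric on forms, the terms with and without $U^*$ are orthogonal, the cross term drops out and $\langle i_U\omega,i_e\omega\rangle=\langle i_U\omega,\,i_eB_U\rangle$. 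Both arguments are now spatial $(k-1)$-forms, and on these $\langle\,,\rangle$ equals $(-1)^{k-1}$ times the positive-definite inner product $(\,,\,)$ induced by $-g$ on $U^\perp$; hence $T(U,e)=(\,i_U\omega,\,i_eB_U\,)$ is a genuine Euclidean inner product.

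From here the estimate is just Cauchy--Schwarz combined with two elementary Euclidean facts on $U^\perp$: one has $|i_U\omega|=|E_U|$ (which is the same sign bookkeeping as above applied to $\langle i_U\omega,i_U\omega\rangle$), and $|i_eB_U|\le |e|\,|B_U|=|B_U|$, the latter being the standard bound for contraction by a unit vector. Therefore $|T(U,e)|\le |E_U|\,|B_U|$, and the arithmetic--geometric mean inequality gives $|E_U|\,|B_U|\le \tfrac12\left(|E_U|^2+|B_U|^2\right)=T(U,U)$, completing the reduction of the first paragraph and hence the proof.

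I expect the main obstacle to be the careful bookkeeping of signs in the second step: checking that $i_eE_U$ really is orthogonal to $i_U\omega$, and tracking the powers of $(-1)$ so that $T(U,e)$ emerges as a positive-definite inner product rather than an indefinite one. Once the problem has been transported to the positive-definite space $U^\perp$, the remaining inequalities (Cauchy--Schwarz, the contraction bound, and the arithmetic--geometric mean) are routine and, notably, independent of the dimension $n$ and of the degree $k$.
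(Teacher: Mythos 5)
Your proof is correct, and it is worth noting that the paper itself offers no proof to compare against: the statement is quoted from Senovilla (\cite{Superenergy}, Th.~4.1) precisely because it is never used later, so your argument stands as an independent, self-contained proof, and a more elementary one than the general superenergy machinery of the cited source (which establishes a dominant property for arbitrary superenergy tensors). Both of your reductions are sound. First, the boost decomposition $U_2=\cosh\beta\,U_1+\sinh\beta\,D_1$ is legitimate: the plane spanned by two distinct future-pointing unit timelike vectors has signature $(+,-)$, and $g(U_1,U_2)>0$ (same time orientation) forces the $U_1$-coefficient to be $+\cosh\beta$; together with $T(U,U)=e(U)\ge 0$ this correctly reduces everything to the dominance estimate $|T(U,e)|\le T(U,U)$. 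Second, the identity $T(U,e)=\left( i_U\omega,\, i_eB_U\right)$ in the positive-definite inner product on spatial $(k-1)$-forms is right: $i_eE_U=-U^*\wedge i_ei_U\omega$ since $i_eU^*=g(U,e)=0$, the cross term dies by exactly the orthogonality of terms with and without $U^*$ that the paper already invokes for $\langle\omega,\omega\rangle=\langle E_U,E_U\rangle+\langle B_U,B_U\rangle$, and on spatial $(k-1)$-forms the paper's metric equals $(-1)^{k-1}$ times a positive-definite one, which cancels the $(-1)^{k-1}$ in Definition \ref{energytensor} and also gives $|i_U\omega|=|E_U|$ consistently with the paper's sign conventions. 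The closing inequalities are all valid in that Euclidean setting: Cauchy--Schwarz, the contraction bound $|i_eB_U|\le|B_U|$ (seen by splitting $B_U=e^*\wedge\alpha+\beta$ with $i_e\alpha=i_e\beta=0$), and the arithmetic--geometric mean bound $|E_U|\,|B_U|\le\frac{1}{2}\left(|E_U|^2+|B_U|^2\right)=T(U,U)$. Indeed your argument proves slightly more than the stated inequality, namely the frame-wise dominance $|T(U,e)|\le T(U,U)$, which is the usual formulation of the dominant energy condition; the only implicit ingredients you should perhaps flag explicitly are the positivity $g(U_1,U_2)>0$ for equally oriented timelike vectors and the fact that $i_U\omega$, $B_U$, and $i_eB_U$ are genuinely spatial forms (each follows from a one-line interior-product computation).
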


\smallskip

For any observer $U$, the Hodge star maps the electric and magnetic parts of $\omega$ into the magnetic and electric parts 
(up to signs) of $* \omega$:
$$ * E_U(\omega) \,=\, \pm B_U (*\omega) \qquad , \qquad * B_U (\omega) \,=\, \pm E_U (*\omega) \ . $$
Therefore, $\omega $ and $*\omega$ have the same energy respect to any observer $U$ and, consequently, both forms have the same energy tensor.

 The following three properties are easily obtained from the definitions, and they will suffice to characterize these energy tensors:
\begin{enumerate}
  \item Let us write $T ( g , \omega)$ to indicate that the energy tensor depends on the metric $g$ and on the $k$-form $\omega$. 
For any $\lambda > 0$, 
$$  T ( \lambda^2 g , \lambda^{k-1} \omega ) \,=\, T ( g , \omega) \ . $$

 \item At any point $x \in X$ and for any observer $U$, it holds $e(U) (x) = 0$ if and only if $\omega_x = 0$. Hence,
$$  T_x = 0 \quad \Leftrightarrow \quad \omega_x = 0 \ . $$
\end{enumerate}

\smallskip

\begin{prop}\label{divergence}
 The energy tensor $T$ of a $k$-form $\omega$ satisfies:
$$ \mathrm{div}\, T \,=\,  i_{\omega} \d \omega - i_{\partial \omega} \omega    \ .$$

In particular, if $\omega$ is closed and co-closed ($\d \omega = 0 = \partial \omega$), then $\mathrm{div}\, T = 0 $.
\end{prop}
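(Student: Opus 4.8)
The plan is to prove the identity by a direct computation in a local chart, starting from the coordinate expression of $T$ in Definition \ref{energytensor} and computing its divergence $(\mathrm{div}\, T)_b = \nabla^a T_{ab}$ (which is unambiguous since $T$ is symmetric). Because the Levi-Civita connection is metric, $\nabla g = 0$, so $\nabla^a$ commutes with raising and lowering indices and the metric factor $g_{ab}$ in the trace term survives only as a gradient $\nabla_b$. Thus I would first write
$$(\mathrm{div}\, T)_b = \frac{(-1)^{k-1}}{(k-1)!}\left[\nabla^a\!\left(\omega_a^{\ i_2\dots i_k}\,\omega_{bi_2\dots i_k}\right) - \frac{1}{k}\,\omega^{i_1\dots i_k}\nabla_b\omega_{i_1\dots i_k}\right],$$
using that $\nabla_b(\omega^{i_1\dots i_k}\omega_{i_1\dots i_k}) = 2\,\omega^{i_1\dots i_k}\nabla_b\omega_{i_1\dots i_k}$.

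Next I would expand the first derivative by the Leibniz rule into two pieces. In the piece where $\nabla^a$ hits the first factor, the contracted derivative $\nabla^a\omega_a^{\ i_2\dots i_k} = -(\partial\omega)^{i_2\dots i_k}$ appears by the coordinate formula for the codifferential recalled in Section \ref{Notations}; collecting factorials and signs, this term is exactly $-(i_{\partial\omega}\omega)_b$, according to the first formula of Remark \ref{FormulasAuxiliares}. The remaining two terms should reproduce $(i_\omega\d\omega)_b$: the gradient term matches the first summand $\frac{(-1)^k}{k!}\omega^{i_1\dots i_k}\nabla_b\omega_{i_1\dots i_k}$ of the second formula in Remark \ref{FormulasAuxiliares} after the simplification $\frac{(-1)^{k-1}}{(k-1)!}\cdot\bigl(-\tfrac1k\bigr)=\frac{(-1)^k}{k!}$, whereas the piece $\frac{(-1)^{k-1}}{(k-1)!}\,\omega^{ai_2\dots i_k}\nabla_a\omega_{bi_2\dots i_k}$ matches the second summand $\frac{1}{(k-1)!}\omega^{i_1\dots i_k}\nabla_{i_1}\omega_{i_2\dots i_kb}$ once the index $b$ is moved from the last slot to the first, which costs a factor $(-1)^{k-1}$ by antisymmetry of $\omega$.

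The whole argument is routine once the auxiliary formulas of Remark \ref{FormulasAuxiliares} are in hand, so the only real work is the careful bookkeeping of signs and factorials; I expect the main source of error to be precisely the sign $(-1)^{k-1}$ generated by reordering the contracted index $b$ against the $k-1$ remaining indices, together with the matching of the two combinatorial prefactors $\frac{1}{(k-1)!}$ and $\frac{1}{k!}$. Finally, the last assertion is immediate: if $\d\omega = 0 = \partial\omega$, both terms on the right-hand side vanish and hence $\mathrm{div}\, T = 0$.
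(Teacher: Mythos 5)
Your proposal is correct and follows essentially the same route as the paper: a local-chart computation of $\nabla^a T_{ab}$ via the Leibniz rule, identifying the contracted-derivative term as $-(i_{\partial\omega}\omega)_b$ and the remaining two terms as $(i_\omega\d\omega)_b$ through the formulas of Remark \ref{FormulasAuxiliares}, with all signs and factorials (including the $(-1)^{k-1}$ from moving the index $b$ and the simplification $\frac{(-1)^{k-1}}{(k-1)!}\cdot\bigl(-\tfrac{1}{k}\bigr)=\frac{(-1)^k}{k!}$) handled correctly.
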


\begin{proof} In a chart,
\begin{align*}
& (\text{div}\, T)_b\, = \, \nabla_{a} {T^a}_b  \,=\, \frac{(-1)^{k-1}}{(k-1)!} \,  \nabla_a \left( \omega^{a i_2 \ldots i_k} \omega_{b i_2 \ldots i_k} - \frac{1}{2k} \, \omega^{i_1 \ldots i_k} \omega_{i_1 \ldots i_k} \, \delta^a_b \right) \\
& =\, \frac{(-1)^{k-1}}{(k-1)!}   \left( (\nabla_a \omega^{a i_2 \ldots i_p}) \omega_{b i_2 \ldots i_k} + \omega^{a i_2 \ldots i_k} (\nabla_a \omega_{bi_2 \ldots i_k}) - \frac{1}{k} \, \omega^{i_1 \ldots i_k} \nabla_ b\omega_{i_1 \ldots i_k}  \right) \\
& \overset{\ref{FormulasAuxiliares}}{=} \, -(i_{\partial\omega}\omega)_b  + \frac{1}{(k-1)!} \, \omega^{i_1 \ldots i_k} \nabla_{i_1} \omega_{i_2 \ldots i_k b} + \frac{(-1)^k}{k!}\, \omega^{i_1 \ldots i_k}\nabla_b \omega_{i_1 \ldots i_k}  \\
&\overset{\ref{FormulasAuxiliares}}{=}\, -(i_{\partial\omega}\omega)_b +(i_\omega\d\omega)_b\quad .
 \end{align*} 
 \end{proof}
 
 \begin{cor}\label{1.10} For any $k$-form $\omega$, the $2$-covariant  tensor $\langle i_- \omega , i_- \omega \rangle$ satisfies:
 $$\mathrm{div}\,\langle i_- \omega , i_- \omega \rangle\,=\, (-1)^{k-1}\left(i_\omega\d\omega-i_{\partial\omega}\omega\right)+ \frac{1}{2}\, \d \langle \omega , \omega \rangle\ .  $$
 \end{cor}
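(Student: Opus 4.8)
The plan is to reduce everything to the divergence formula already proved in Proposition \ref{divergence}, by expressing the tensor $S:=\langle i_-\omega,i_-\omega\rangle$ in terms of the energy tensor $T$. Rearranging the defining identity of Definition \ref{energytensor} gives, as an equality of $2$-covariant tensors,
$$ S \,=\, (-1)^{k-1}\,T + \tfrac{1}{2}\,\langle\omega,\omega\rangle\,g \ . $$
Taking divergence and using its linearity, the computation splits into $\mathrm{div}\,T$ and the divergence of the scalar multiple $\langle\omega,\omega\rangle\,g$ of the metric.

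The first piece is immediate: Proposition \ref{divergence} gives $\mathrm{div}\,T = i_\omega\d\omega - i_{\partial\omega}\omega$, which contributes the term $(-1)^{k-1}\bigl(i_\omega\d\omega - i_{\partial\omega}\omega\bigr)$.

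For the second piece I would work in a local chart with the convention $(\mathrm{div}\,A)_b=\nabla_a {A^a}_b$ used in the proof of Proposition \ref{divergence}. Writing $f:=\langle\omega,\omega\rangle$, one has $(f g)^a{}_b = f\,\delta^a_b$, and metric compatibility $\nabla g=0$ (equivalently $\nabla\delta^a_b=0$) yields $\nabla_a(f\,\delta^a_b)=\nabla_b f=(\d f)_b$. Hence $\mathrm{div}(\langle\omega,\omega\rangle\,g)=\d\langle\omega,\omega\rangle$, contributing the term $\tfrac{1}{2}\,\d\langle\omega,\omega\rangle$.

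Adding the two contributions reproduces the asserted formula exactly. The only step requiring any attention is the divergence of $\langle\omega,\omega\rangle\,g$, and even this is routine once $\nabla g=0$ is invoked; there is no genuine obstacle, as the statement is essentially an algebraic rearrangement of Proposition \ref{divergence} supplemented by the derivative of the scalar $\langle\omega,\omega\rangle$.
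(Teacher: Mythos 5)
Your proof is correct and follows essentially the same route as the paper's: rearranging Definition \ref{energytensor} to write $\langle i_-\omega,i_-\omega\rangle=(-1)^{k-1}T+\tfrac{1}{2}\langle\omega,\omega\rangle\,g$, then applying Proposition \ref{divergence} and the identity $\mathrm{div}(f g)=\d f$. The only difference is that you spell out the metric-compatibility computation for $\mathrm{div}(\langle\omega,\omega\rangle\,g)$, which the paper leaves implicit.
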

 \begin{proof} By definition of $T$, we have:
 $$\langle i_- \omega , i_- \omega \rangle\,=\, (-1)^{k-1}T+\frac{1}{2}\langle \omega , \omega \rangle g\ .$$

Hence:
 $$
  \mathrm{div}\,\langle i_- \omega , i_- \omega \rangle\,=\, (-1)^{k-1}\text{div}\,T+\frac{1}{2}\, \text{div}\left(\langle \omega , \omega \rangle g\right) \,
 $$
 $$ =\, (-1)^{k-1}\left(i_\omega\d\omega-i_{\partial\omega}\omega\right)+\frac{1}{2}\,\d\langle \omega , \omega \rangle\ .
$$
\end{proof}

\begin{rem}
 Analogously to the case of a 2-form, the energy tensor $T$ of a $k$-form $\omega$ also appears as the Euler-Lagrange tensor 
of a variational principle. Namely, if we fix a $k$-form $\omega$ and consider the
variational problem of order 0 defined by the lagrangian density $\langle \omega , \omega \rangle \, \d X$ on the 
bundle of Lorentzian metrics, then its Euler-Lagrange
equations are precisely $T = 0$.
\end{rem}

\section{Natural tensors associated to a metric and a form}  

In this section we study natural tensors associated to a Lorentzian metric $g$ and a $k$-form $\omega$. The notion of
``natural construction'' is formalized within the language of natural bundles and natural operators. Our presentation slightly differs from the
standard approach (\cite{Kolar}), so we give a brief exposition of it.


\subsection{Natural operators}

In the following, all bundles $E \to X$ are assumed to be sub-bundles of some bundle of tensors on $X$ (observe $E \to X$ need not 
be a vector bundle; v.gr., the bundle of Lorentzian metrics). This hypothesis is not essential, but simplifies the exposition.

\smallskip

\begin{defn} Let $E,F\to X$ be two bundles over $X$ and let $\mathcal{E},\mathcal{F}$ be their sheaves of smooth sections, respectively.
  A morphism of sheaves $\,T \colon \mathcal{E} \longrightarrow \mathcal{F}\,$ is called a \textbf{regular operator} 
if, for any smooth family $\,\{ e^s \}_{s \in S}\,$ of local sections of $E$ depending on certain parameters, 
the family of sections $\,\{ T(e^s) \}_{s \in S}\,$ of $F$ also depends smoothly on those parameters.
\end{defn} 

\smallskip

According to a fundamental result due to Slovak (\cite{Peetre}), the regularity condition of a operator $T \colon \mathcal{E} \to \mathcal{F}$ implies the existence of a smooth map $\tilde T\colon J^\infty E\to F$ such that $\, T(e)=\tilde T\circ j^\infty e\,$ for all local section $e$ of $E$, so that  $T(e)$ depends on the $\infty$-jet of $e$.

\smallskip

\begin{defn} A bundle $E\to X$ (sub-bundle of a bundle of tensors) is said to be \textbf{natural} if it is 
stable with respect to the action of local diffeomorphisms of $X$.

That is, for any diffeomorphism $\tau \colon U \to V$ between open sets of $X$, it holds:
$$e\in\mathcal{E}(V)\quad\Rightarrow\quad \tau^*e\in\mathcal{E}(U)\ .$$
\end{defn}

\smallskip

\begin{defn}
Let $E,F\to X$ be natural bundles. A regular operator $T \colon \mathcal{E} \to\mathcal{F} $ is said to be 
\textbf{natural} if it is equivariant with respect to the action of local diffeomorphisms of $X$. 
 
 That is, for any diffeomorphism $\tau \colon U \to V$ between open sets of $X$, it holds:
  $$ T (\tau^*e) = \tau^* (T(e)) \ . $$
\end{defn}\medskip

\smallskip

\subsection{Normal tensors}

When dealing with jets at a point, it is useful to write the ``normal expressions" of the geometric objects under consideration.

 Let $x \in X$ be a point and let $g$ be a germ of Lorentzian metric at $x$. 
 
\smallskip

\begin{defn}
 A chart $(z_0, \ldots , z_n)$ on a neighbourhood of $x$ is said to be a \textbf{normal
 system} for $g$ at the point $x$ if the geodesics passing through $x$ at $t=0$ are precisely 
the ``straight lines" $\{ z_0(t) = \lambda_0 t , \ldots , z_n(t) = \lambda_n t \} $, where $\lambda_i \in \mathbb{R}$.
\end{defn}

\smallskip

\begin{rem}
 Via the exponential map $\mathrm{exp}_x \colon T_x X \to X$, normal systems 
at $x$ correspond bijectively to linear coordinates on $T_xX$. Therefore, two normal systems at $x$ differ on a linear transformation.
\end{rem}

Normal systems are characterized by the following well-known lemma

\begin{lem}[Gauss Lemma] 
Let $(z_0 , \ldots , z_n)$ be germs of a chart centred at $x \in X$. This chart is a normal system for a germ of a Lorentzian metric $g$ if and only if the metric coefficients $g_{ij}$ satisfy the equations:
$$ \sum_j g_{ij} z_j = \sum_{j} g_{ij}(x)\, z_j \ . $$
\end{lem}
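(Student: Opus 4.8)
The plan is to translate the geometric definition of a normal system into the geodesic equations, and then into a first-order differential identity for the metric coefficients. Place the chart so that $x$ corresponds to the origin, and introduce the radial (Euler) vector field $R:=\sum_k z_k\,\partial_k$ together with the functions $h_i:=\sum_j g_{ij}z_j$ (the left-hand side of the claimed identity) and $Q:=\sum_{j,k} g_{jk}z_jz_k$, where $\partial_k$ denotes $\partial/\partial z_k$. First I would observe that the straight line $t\mapsto(\lambda_0 t,\dots,\lambda_n t)$ is a geodesic precisely when it satisfies the geodesic equation $\ddot z_i+\sum_{j,k}\Gamma^i_{jk}\,\dot z_j\dot z_k=0$; since here $\ddot z_i=0$ and $\dot z_j=\lambda_j$, this reduces to $\sum_{j,k}\Gamma^i_{jk}(\lambda t)\,\lambda_j\lambda_k=0$. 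Letting the directions $\lambda$ and the scale $t$ vary and rescaling, this is equivalent to the single pointwise condition $\sum_{j,k}\Gamma^i_{jk}(z)\,z_jz_k=0$ holding for every $i$ throughout the neighbourhood. Thus the chart is a normal system if and only if $\sum_{j,k}\Gamma^i_{jk}\,z_jz_k=0$ for all $i$.

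Next I would lower the index with $g_{mi}$ and exploit the symmetry of $z_jz_k$. Writing $\Gamma_{m,jk}=\tfrac12(\partial_jg_{mk}+\partial_kg_{mj}-\partial_mg_{jk})$ and contracting with $z_jz_k$, the first two terms coincide, and recognising $\sum_{j,k}(\partial_jg_{mk})z_jz_k=R(h_m)-h_m$ together with $\sum_{j,k}(\partial_mg_{jk})z_jz_k=\partial_mQ-2h_m$ yields the clean identity
$$ \sum_{j,k}\Gamma_{m,jk}\,z_jz_k \;=\; R(h_m)-\tfrac12\,\partial_mQ \ . $$
Since $g$ is nondegenerate, the normal-system condition becomes $R(h_m)=\tfrac12\,\partial_mQ$ for every $m$. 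On the other hand, the identity to be proved asserts exactly that each $h_m$ equals the linear function $L_m:=\sum_j g_{mj}(x)\,z_j$. The implication from the linear form to the differential condition is immediate: if $h_m=L_m$ then $Q=\sum_m z_m L_m$ is a constant-coefficient quadratic form, so $R(h_m)=L_m=\tfrac12\partial_mQ$ by Euler's relation.

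The substantive direction is the converse, and here lies the main obstacle: because the metric is only smooth (not analytic) I cannot argue by comparing formal power series, but must instead extract the homogeneity intrinsically. Contracting $R(h_m)=\tfrac12\partial_mQ$ with $z_m$ and using $Q=\sum_m z_m h_m$ gives $R(Q)=Q+\sum_m z_m R(h_m)=Q+\tfrac12R(Q)$, hence $R(Q)=2Q$; by Euler's theorem $Q$ is homogeneous of degree $2$, and computing its Hessian at the origin identifies it as $Q=\sum_{j,k}g_{jk}(x)z_jz_k=\sum_m z_m L_m$. Feeding this back gives $R(h_m)=\tfrac12\partial_mQ=L_m$ for each $m$. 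It then remains to solve this Euler-type equation along rays: restricting to $z=t\zeta$ and setting $\phi(t)=h_m(t\zeta)$, one has $R(h_m)(t\zeta)=t\,\phi'(t)$ while $L_m(t\zeta)=t\,L_m(\zeta)$, so $\phi'(t)=L_m(\zeta)$ for $t\neq0$ and, by continuity, for all $t$; integrating from $\phi(0)=0$ gives $h_m(t\zeta)=L_m(t\zeta)$. As $\zeta$ and $t$ are arbitrary this yields $h_m=L_m$ identically, which is the claimed equation. This ODE-along-rays step is the only delicate point, since it is what replaces the power-series reasoning that is unavailable in the merely smooth category.
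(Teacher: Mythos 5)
Your proof is correct. Note that the paper itself offers no proof of this lemma --- it is quoted as a ``well-known'' characterization (it is the classical description of normal coordinates) --- so there is no argument of the authors to compare against; yours stands as a complete, self-contained justification. The reduction of the normal-system condition to $\sum_{j,k}\Gamma^i_{jk}(z)\,z_jz_k=0$ via rescaling along rays is right (ODE uniqueness gives the ``precisely'' in the definition for free), and the key identity $\sum_{j,k}\Gamma_{m,jk}\,z_jz_k=R(h_m)-\tfrac12\,\partial_mQ$ checks out: contracting the Koszul expression with $z_jz_k$ gives $\tfrac12\bigl(2(R(h_m)-h_m)-(\partial_mQ-2h_m)\bigr)$, which collapses as you claim. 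You are also right to flag the converse-Euler step as the delicate point in the smooth category; your bootstrap ($R(Q)=2Q$, identify $Q$ by its Hessian at the origin, then integrate $\phi'(t)=L_m(\zeta)$ along rays with $\phi(0)=0$) handles it cleanly, with the implicit standing assumption (harmless for germs) that the chart domain is star-shaped about the origin so that the ray arguments apply.
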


\begin{defn} Given a a normal system $(z_0, \ldots , z_n)$ for $g$ at $x$, let us write
 $$ g_{ij, a_1 \ldots a_r} := \frac{\partial^r g_{ij}}{\partial z_{a_1} \cdots \partial z_{a_r}} (x) \ .$$
For any integer $r\geq 0$, the \textbf{$r$-th normal tensor} of $g$ at $x$  is defined to be
$$ g^r_x := \sum_{ij a_1 \ldots a_r} g_{ij,a_1\ldots a_r} \, \d_x z_i \otimes \d_x z_j \otimes \d_x z_{a_1} \otimes 
\ldots \otimes \d_x z_{a_r} \ .$$
\end{defn}
 
 Analogously, if $\omega$ is a differential $k$-form and we write:
 $$ \omega_{i_1 \ldots i_k , a_1 \ldots a_r} := \frac{\partial^s \omega_{i_1 \ldots i_k} }{\partial z_{a_1} \ldots \partial 
z_{a_s}} (x) $$ then, for any integer $s \geq 0$,
we define the \textbf{$s$-th normal tensor} of $\omega$ at $x$ to be
$$\omega^s_x := \sum_{i_1 , \ldots i_k, a_1 , \ldots a_s} \omega_{i_1 \ldots i_k, a_1 \ldots a_s} \, \d_x z_{i_1} \otimes \ldots 
\otimes \d_x z_{i_k} \otimes \d_x z_{a_1} \otimes \ldots \otimes \d_x z_{a_s} \ .
$$

\smallskip

Using the Gauss Lemma, it is easy to check that the normal tensors $g^r_x$ have the symmetries stated in the following definition.

\begin{defn}
For each integer $r \geq 1$, the vector space of \textbf{metric normal tensors} of order $r$ at $x$ is the vector subspace $N_r \subset \otimes^{r+2} T^*_x X$ of tensors $P$ with the following symmetries:
 \begin{enumerate}
 \item They are symmetric in the first two and the last $r$ indices:
 $$ P_{ij a_1 \ldots a_r} = P_{ji a_1 \ldots a_r} \quad , \quad P_{ij a_1 \ldots a_r} = P_{ij a_{\sigma(1)} \ldots a_{\sigma (r)}} \quad \forall \, \sigma \in S_r \ . $$
 
 \item The cyclic sum over the last $r+1$ indices is zero:
 $$ P_{ij  a_1 \ldots a_r} + P_{i a_r j a_1 \ldots a_{r-1} } + \ldots + P_{i a_1 \ldots a_r j } = 0 \ .$$
 \end{enumerate} 
For $r=0$, we define $N_0:=M_x$ to be the space of Lorentzian metrics at $x \in X$.
\end{defn}

\smallskip

\begin{rem}
 Due to symmetries, $N_1 = 0$, and therefore $g_x^1=0$ for any metric.
\end{rem}

\smallskip

\begin{defn}
 For any integer $s \geq 0$, the vector space of \textbf{$k$-form normal tensors} of order $s$ at $x$ is defined as:
 $$\Lambda_s :=\, \Lambda^k T^*_x X \otimes S^s T^*_x X \ . $$

Of course, for any $k$-form $\omega$ we have $\,\omega_x^s\in\Lambda_s$. 
\end{defn}

\smallskip

\begin{rem} Although we will not use this fact, let us remark that the 
sequences of tensors $\,(g_x,g^2_x,\dots, g_x^r)\,$ and $\,(g_x,R_x,(\nabla R)_x,\dots, (\nabla^{r-2}R)_x)\,$ mutually determine each other, as so happens with the sequences
 $\,(\omega_x,\omega_x^1,\dots,\omega_x^s)\,$ and  $\,(\omega_x,$ $ (\nabla\omega)_x,\dots,(\nabla^s\omega)_x)$, once the metric $g$ is fixed.
\end{rem}

\subsection{Computation of natural tensors associated to a metric and a form}

Let $\, M\subset S^2T^*X\,$ be the bundle of Lorentzian metrics on $X$, let $\Lambda^kX$ be the bundle of differential $k$-forms on $X$ and let $T_p^qX=\bigotimes^pT^*X\otimes\bigotimes^qTX$ be the bundle of $(p,q)$-tensors on $X$. 

Their sheaves of smooth sections will be written, respectively,
$$Metrics\qquad,\qquad Forms_k\qquad,\qquad Tensors_{p}^q\ .$$

\begin{defn}
A \textbf{natural $(p,q)$-tensor}, associated to a Lorentzian metric and a differential $k$-form, is a natural operator $\, T \colon Metrics \times Forms_k \longrightarrow Tensors_{p}^q$.
\end{defn}

\begin{defn} A natural $(p,q)$-tensor $T$ is said to be \textbf{homogeneous of weight} $w\in \mathbb{R}$ if, for any metric $g$, any $k$-form $\omega$ and any real number $\lambda > 0$, it holds:
 $$ T ( \lambda^2 g , \lambda^{k-1} \omega ) = \lambda^w\, T ( g, \omega)  \ . $$
 
 If $T$ is homogeneous of weight 0, we say it is \textbf{independent of the unit of scale}.  
 \end{defn}

\smallskip

Let $x \in X$ be a point and let $g_x$ be a Lorentzian metric at $x$. We will write $O_{g_x} := O(1,n)$ for the orthogonal group of $(T_x X , g_x)$. 
The symmetric powers $S^d N_r$ and $S^{c} \Lambda_s$ are linear representations of $O_{g_x}$.

If $V$ and $W$ are linear representations of $O_{g_x}$, 
we denote $\mathrm{Hom}_{O_{g_x}} (V , W)$ the vector space of $O_{g_x}$-equivariant linear maps $V \to W$.

 The following theorem allows to compute homogeneous natural tensors:

\begin{thm}\label{StredderSlovak} 
Let us fix a point $x\in X$ and a Lorentzian metric $g_x$ at $x$.
There exists an $\mathbb{R}$-linear isomorphism:
$$\begin{CD}
  \{ \text{Natural (p,q)-tensors homogenous of weight }w \,  \}  \\ 
  @| \\ 
     \bigoplus\limits_{\{ d_i , c_j \} } \mathrm{Hom}_{O_{g_x}} (S^{d_2}N_2 \otimes \cdots \otimes
S^{d_r}N_r \otimes S^{c_0} \Lambda_0 \otimes \ldots \otimes S^{c_s} \Lambda_s \ ,\  (T_{p}^q X)_x )
\end{CD}$$
where the summation is over all sequences of non-negative integers $\{d_2 , \ldots , d_r \} $, $r\geq 2$, and $\{ c_0 , \ldots , c_s \}$, satisfying
the equation:
\begin{equation}\label{Condicion}
2d_2 + \ldots + r\, d_r + c_0 + 2 c_1 + \ldots + (s+1) c_s = p-q -w\ .
\end{equation}
If this equation has no solutions, the above vector space is reduced to zero.
\end{thm}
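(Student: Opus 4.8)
The plan is to convert this differential-geometric statement into a purely representation-theoretic one about the group $O_{g_x}$, in three stages: reduce a natural tensor to a map on jets at the single point $x$; re-express those jets through the normal tensors, which transform \emph{tensorially} under the linear isotropy; and finally use a homothety together with the homogeneity hypothesis to force polynomiality and to read off the weight condition \eqref{Condicion}. To begin, I would invoke the regularity theorem of Slovak (\cite{Peetre}) quoted above: any natural, hence regular, operator $T$ factors as $T(g,\omega)=\tilde T\circ j^\infty(g,\omega)$, so that $T(g,\omega)_x$ depends only on the $\infty$-jet of $(g,\omega)$ at $x$. By the Gauss Lemma, fixing a normal system for $g$ at $x$ identifies this jet with the sequences of normal tensors $(g_x=g^0_x,\,g^2_x,\,g^3_x,\dots)$ and $(\omega^0_x,\omega^1_x,\dots)$, recalling that $g^1_x=0$.

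The technical heart of the argument is the observation that a local diffeomorphism $\tau$ fixing $x$ acts on these normal tensors only through its linear part $d\tau_x$, the higher-order jet of $\tau$ being invisible to them; this is precisely because normal systems are defined invariantly through geodesics, and two such systems at $x$ differ by a linear transformation. Consequently, once $g_x$ is fixed, the naturality of $T$ becomes equivalent to the assertion that $\tilde T$, regarded as a map on the product of the spaces $N_r$ and $\Lambda_s$, is equivariant under the linear isotropy group of $(x,g_x)$, which is exactly $O_{g_x}=O(1,n)$.

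It then remains to show that $\tilde T$ is a polynomial depending on only finitely many normal tensors and to determine which monomials may occur. For this I would exploit the homothety $H_\mu\colon z\mapsto\mu z$ in normal coordinates, whose linear part is $\mu\cdot\mathrm{Id}$. Combining the rescaling $(g,\omega)\mapsto(\lambda^2 g,\lambda^{k-1}\omega)$ — under which, geodesics being unchanged, $g^r_x\mapsto\lambda^2 g^r_x$ and $\omega^s_x\mapsto\lambda^{k-1}\omega^s_x$ — with the homothety $H_{1/\lambda}$ chosen so that the value $g_x$ is restored, a short computation using that $g^r_x$ is $(r+2)$-covariant and $\omega^s_x$ is $(k+s)$-covariant shows that the combined operation multiplies $g^r_x$ by $\lambda^{-r}$ and $\omega^s_x$ by $\lambda^{-(s+1)}$, while multiplying the target $(p,q)$-tensor by $\lambda^{w-(p-q)}$. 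Since all the exponents $r\geq 2$ and $s+1\geq 1$ are strictly positive, the homogeneous function theorem forces $\tilde T$ to be polynomial, and a monomial of multidegree $(d_2,\dots,d_r;c_0,\dots,c_s)$ transforms by $\lambda^{-(\sum_r r\,d_r+\sum_s (s+1)c_s)}$, so it can occur only when $\sum_r r\,d_r+\sum_s (s+1)c_s=p-q-w$, which is \eqref{Condicion}. As the coefficients are positive, this equation has only finitely many solutions, so only finitely many normal tensors appear; and a homogeneous equivariant polynomial of fixed multidegree is the same datum as an $O_{g_x}$-equivariant linear map out of the corresponding tensor product of symmetric powers $S^{d_r}N_r$ and $S^{c_s}\Lambda_s$. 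Assembling these identifications over all admissible multidegrees yields the asserted isomorphism, reducing to zero when \eqref{Condicion} is unsolvable.

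I expect the main obstacle to be the rigorous justification of the second stage: the claim that the higher-order jet of a diffeomorphism fixing $x$ acts trivially on the normal tensors, so that naturality collapses to plain $O_{g_x}$-equivariance. Making this precise requires the transformation law of the normal tensors under change of normal coordinates, together with the bookkeeping of how the derivatives of $\tau$ enter and cancel. By comparison, the homothety weight count and the passage from equivariant polynomials to the $\mathrm{Hom}_{O_{g_x}}$-spaces are routine once this linearization is in hand.
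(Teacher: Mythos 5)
Your proposal is correct and follows essentially the same route as the paper: the paper in fact prints no proof of Theorem \ref{StredderSlovak}, saying only that the argument is similar to Stredder's Theorem 2.5 and Slov\'{a}k's Theorem 3.3, and your sketch reconstructs exactly that standard argument — Slov\'{a}k's regularity theorem to pass to $\infty$-jets, the Gauss Lemma and normal tensors to linearize the jet data, reduction of naturality to $O_{g_x}$-equivariance, and the homothety combined with the homogeneous function theorem to force polynomiality and the weight equation \eqref{Condicion}. Your exponent bookkeeping ($g^r_x\mapsto\lambda^{-r}g^r_x$, $\omega^s_x\mapsto\lambda^{-(s+1)}\omega^s_x$, target scaled by $\lambda^{w-(p-q)}$) is correct, and the step you single out as the technical heart — that a diffeomorphism fixing $x$ acts on the normal tensors only through its differential $d\tau_x$ — is indeed where the real content of the cited proofs lies.
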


\smallskip

This theorem is closely related to results of Stredder (\cite{Stredder}, Theorem 2.5) and Slovak (\cite{SlovakJournal}, Theorem 3.3) 
and the proof is similar. 
\smallskip

\begin{rem} If $\phi \colon S^{d_2}N_2 \otimes \cdots \otimes S^{d_r} N_r \otimes S^{c_0} \Lambda_0 \otimes \ldots \otimes S^{c_s} 
\Lambda_s \to (T_{p}^q X)_x$ is an $O_{g_x}$-equivariant linear map, then the corresponding natural tensor
$T(g,\omega)$ is obtained by the formula:
$$ T(g , \omega)_{x} = \phi \left( (g^2_{x} \otimes \stackrel{d_2}{\ldots} \otimes \, g^2_{x}) \otimes \cdots
\otimes (g^r_{x} \otimes \stackrel{d_r}{\ldots} \otimes \, g^r_{x}) 
 \otimes \ldots 
\otimes ( \omega^s_x \, \otimes \stackrel{c_s}{\ldots} \otimes \, \omega^s_x)  \right) $$ where $(g^2_x, g^3_x, \ldots )$ is
the sequence of metric normal tensors of $g$ at $x$ and $( \omega^0_x , \omega^1_x , \ldots )$ is the sequence of $k$-form normal tensors of $\omega$  at $x$. In this equality, $g$ is assumed to have the prefixed value at $x$. 
\end{rem}

\smallskip

\smallskip

\begin{rem}\label{CalculoAplicaciones}
 The $O_{g_x}$-equivariant linear maps that appear in the theorem can be explicitly computed using the
isomorphism:
$$ 
\begin{array}{c}
\mathrm{Hom}_{O_{g_x}} \left( S^{d_2}N_2 \otimes \cdots \otimes S^{c_s} \Lambda_s \ ,\
(T_{p}^qX)_x \right) \\
\parallel \\
\mathrm{Hom}_{O_{g_x}} \left( S^{d_2} N_2 \otimes \cdots \otimes S^{c_s} \Lambda_s \otimes
(T_{q}^pX)_x \ ,\ \mathbb{R} \right) \ .
\end{array}
$$ 
and
applying the Main Theorem of the invariant theory for the orthogonal group $O_{g_x}$ (see \cite{Jaime}, Th. 4.1, for a proof in
the Lorentzian case).
This theorem states that any $O_{g_x}$-equivariant linear map $S^{d_2} N_2 \otimes \cdots \otimes (T_{q}^pX)_x
\to \mathbb{R}$ is a linear combination of iterated contractions with respect to the metric $g_{x}$.\goodbreak

As an example, for a non zero linear map to exist, the total order (covariant plus contravariant order) of the space of tensors 
$S^{d_2} N_2 \otimes \cdots \otimes (T_{q}^pX)_x $ has to be even. 
\end{rem}

\smallskip

\section{Characterization of the energy tensors}

In this section, we characterize the energy tensor of a $k$-form by three conditions (Theorem \ref{caracterizacion2}). To do so, we analyse separately the consequences of each of these conditions.

\begin{prop} 
Let $\, T\colon Metrics\times Forms_k\longrightarrow Tensors_2^0\,$ be a natural tensor, with $k\neq 1,3$. 

If it is independent of the unit of scale, then $T(g, \omega)$ is an $\mathbb{R}$-linear combination of the following four tensors:
$$Ricci(g)\quad ,\qquad r(g)g\quad ,\qquad \langle i_{-} \omega , i_{-} \omega \rangle_g\quad ,\qquad \langle \omega , \omega \rangle_g\, g\quad $$ where $r(g)$ stands for the scalar curvature of the metric $g$.
\end{prop}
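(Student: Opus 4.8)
The plan is to invoke Theorem \ref{StredderSlovak} with $p=2$, $q=0$ and weight $w=0$ (the latter being exactly independence of the unit of scale). The isomorphism there reduces the problem to computing the $O_{g_x}$-equivariant linear maps indexed by the non-negative integer solutions of \eqref{Condicion}, which here reads
$$ 2d_2 + 3 d_3 + \ldots + r\, d_r + c_0 + 2c_1 + \ldots + (s+1)c_s \,=\, 2 \ . $$
First I would enumerate these solutions. Since $c_0$ is the only generator of weight $1$, no solution can use $c_0$ together with another generator without overshooting, forcing $c_0=2$; and the remaining generators of weight $2$ are exactly $d_2$ and $c_1$. Hence the only possibilities are: (a) $d_2=1$ and all others zero; (b) $c_1=1$ and all others zero; (c) $c_0=2$ and all others zero. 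Accordingly the space of such tensors splits as the direct sum of three spaces $\mathrm{Hom}_{O_{g_x}}(-,(T_2^0X)_x)$, associated respectively with $N_2$, with $\Lambda_1=S^1\Lambda_1$, and with $S^2\Lambda_0=S^2(\Lambda^k T_x^*X)$, and it remains to compute each summand.

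To compute each summand I would use Remark \ref{CalculoAplicaciones}: every equivariant map is a linear combination of iterated metric contractions, so I only need to list the admissible contraction patterns producing a $2$-covariant tensor. Two elementary observations drive the bookkeeping: a contraction of two antisymmetric indices belonging to a single copy of $\omega$ vanishes, and each contraction removes two indices. For summand (a), the normal tensor $g^2_x$ carries the symmetries of an algebraic curvature tensor, whose only independent contractions down to two free indices are, up to scalars, the Ricci tensor and the scalar curvature times the metric; this produces the pair $Ricci(g)$ and $r(g)g$. For summand (c), the two copies of $\omega$ carry $2k$ indices, of which $2k-2$ must be contracted; since self-contractions inside one copy vanish, every contraction must be a cross-contraction between the two copies, and the only surviving patterns leave either one free index on each copy, giving $\langle i_{-}\omega, i_{-}\omega\rangle$, or pair the two output slots against each other, giving $\langle\omega,\omega\rangle\,g$. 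This yields the remaining two tensors.

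The decisive point, and the place where the hypothesis $k\neq 1,3$ enters, is summand (b). A single copy of $\omega^1\in\Lambda_1$ has $k+1$ indices ($k$ antisymmetric, plus one coming from $S^1T_x^*X$), of which $k-1$ must be contracted to leave a $2$-tensor. As antisymmetric indices cannot be contracted among themselves, the lone extra index can absorb at most one contraction, so a nonzero pattern would force either $k+1=2$ (no contraction, i.e. $k=1$) or $k-1=2$ after spending the extra index (i.e. $k=3$). Both values are excluded, so summand (b) vanishes identically. I expect this vanishing to be the main obstacle to argue cleanly: one must check that no admissible contraction survives, not merely that index counts fail to match. Finally I would record that all four surviving contractions are symmetric $2$-covariant tensors, so no antisymmetric term can occur, which completes the identification of $T(g,\omega)$ as an $\mathbb{R}$-linear combination of $Ricci(g)$, $r(g)g$, $\langle i_{-}\omega, i_{-}\omega\rangle$ and $\langle\omega,\omega\rangle\,g$.
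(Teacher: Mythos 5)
Your proposal is correct and follows essentially the same route as the paper: invoke Theorem \ref{StredderSlovak} with $p=2$, $q=0$, $w=0$, enumerate the three solutions $d_2=1$, $c_1=1$, $c_0=2$ of equation \eqref{Condicion}, and dispose of each summand by the invariant-theory contraction argument of Remark \ref{CalculoAplicaciones}, with the case $c_1=1$ killed for $k\neq 1,3$ exactly as in the paper. The only divergence is cosmetic: for the summand $d_2=1$ the paper simply cites \cite{Einstein}, Theorem 5.1, whereas you argue the contraction count for $N_2$ directly.
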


\begin{proof} By Theorem \ref{StredderSlovak}, such an homogeneous natural tensor of weight $w=0$ is determined by a $O_{g_x}$-equivariant linear map:
$$\begin{CD}
 S^{d_2} N_2 \otimes \ldots \otimes S^{d_r} N_r \otimes S^{c_0} \Lambda_0 \otimes \ldots \otimes 
S^{c_s} \Lambda_s @>>> \bigotimes^2T^*_xX
\end{CD} $$ 
where the coefficients $d_i,c_j \in \mathbb{N}$ satisfy the equation:

$$ 2d_2 + \ldots + rd_r + c_0 + 2 c_1 + \ldots + (s+1) c_s = 2 \ .$$

If some $c_i$ is non zero, then there are only two possibilities:

\medskip
$\bullet$ $c_0 = 2$, $c_1 = \ldots = c_s = d_j = 0$. In this case, we are reduced to compute $O_{g_x}$-equivariant linear maps:

$$ S^2 ( \Lambda^k_xX ) \otimes T_xX \otimes T_xX \longrightarrow \mathbb{R} \ .$$

As explained in Remark \ref{CalculoAplicaciones}, those linear maps are linear combinations of iterated contractions. Due to symmetries, any such an 
iterated contraction is a linear combination of these two:
$$
T \longmapsto T_{a_1\dots a_ka_1\dots a_kbb}\qquad,\qquad
T \longmapsto T_{ba_2\dots a_kca_2\dots a_kbc} \ .$$
These contractions, in turn, correspond, respectively, with the tensors:
$$ \langle \omega , \omega \rangle_g g \quad , \quad \langle i_{-} \omega , i_{-} \omega \rangle _g \ .$$

\medskip

$\bullet$ $c_1 = 1$, $c_0 = c_2 \ldots = c_s= d_j = 0$: Any such a tensor would produce a $O_{g_x}$-equivariant linear map:
$$ \Lambda^k_xX \otimes T^*_xX \otimes T_xX \otimes T_xX \longrightarrow \mathbb{R} \ .$$
but there are no such maps for $k$ even or $k \geq 5$ (the contraction of two skew-symmetric indices is zero).

\medskip

Finally, if the $c_i$ are all zero, then the tensor $T(g,\omega)$ does not depend on $\omega$ and therefore it is a linear combination of $\, Ricci(g)\,$ and $\, r(g)g\,$ (see details on \cite{Einstein}, Theorem 5.1).

\end{proof}

\smallskip

\begin{rem} In the previous Proposition, if $k = 3$ there also exists 
the $O_{g_x}$-invariant linear map:
$$ \Lambda^3 T^*_xX \otimes T^*_xX \otimes T_xX \otimes T_xX \to \mathbb{R} \quad , \quad T\longmapsto T_{abcabc}$$
 that corresponds to the natural tensor $C_{01} (\nabla \omega)$, where $C_{01}$ denotes the contraction of the first two indices.

If $k=1$,
$$ T^*_xX \otimes T^*_xX \otimes T_xX \otimes T_xX \to \mathbb{R} $$ there exist three different iterated contractions:
$$T\longmapsto T_{aabb}\qquad,\qquad T\longmapsto T_{abab}\qquad ,\qquad T\longmapsto T_{abba}\ ,$$
 that correspond, respectively, with the natural tensors:
$$ (\mathrm{div}_g\, \omega ) g \quad , \quad \nabla_g\, \omega \quad , \quad (\nabla_g\, \omega)^t $$ where 
$(\nabla_g\, \omega)^t (D_1 , D_2) := (\nabla_g\, \omega) (D_2 , D_1)$.\medskip
\end{rem}

\smallskip

\begin{prop} Let $\, T\colon Metrics\times Forms_k\longrightarrow Tensors_2^0\,$ be a natural tensor. If it satisfies:
\begin{itemize}
\item[1)] $T$ is independent of the unit of scale: $\, T (\lambda^2 g , \lambda^{k-1} \omega) = T( g , \omega)\,$ for all $\lambda>0$,

\item[2)] At any point, $\quad\omega_x = 0 \ \Rightarrow \ T(g,\omega)_x = 0$,
\end{itemize}
then there exist universal constants $\mu_1,\mu_2 \in \mathbb{R}$ such that:
$$ T(g,\omega)\,=\,\mu_1\,\langle i_{-} \omega , i_{-} \omega \rangle_g +\mu_2\,\langle \omega , \omega \rangle_g\, g\ .$$

\end{prop}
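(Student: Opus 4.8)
The plan is to feed the hypotheses into the classification just obtained. By the previous Proposition, together with the subsequent Remark listing the extra invariants that occur when $k=1$ or $k=3$, any natural tensor $T(g,\omega)$ that is independent of the unit of scale is a universal linear combination
$$ T(g,\omega)=\alpha\,Ricci(g)+\beta\,r(g)\,g+\mu_1\,\langle i_{-}\omega,i_{-}\omega\rangle_g+\mu_2\,\langle\omega,\omega\rangle_g\,g+L(g,\omega), $$
where $L$ gathers the first-order invariants appearing only in the exceptional degrees: $L=\gamma\,C_{01}(\nabla\omega)$ if $k=3$, a combination of $(\mathrm{div}_g\,\omega)\,g$, $\nabla_g\,\omega$ and $(\nabla_g\,\omega)^t$ if $k=1$, and $L=0$ otherwise. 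The whole point is to show that hypothesis 2 annihilates every summand except the two quadratic ones.

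First I would dispose of the curvature terms. Evaluating the identity above on the pair $(g,0)$, every summand containing $\omega$ vanishes identically, so hypothesis 2 (which gives $T(g,0)_x=0$ at every point, since the zero form vanishes everywhere) yields $\alpha\,Ricci(g)+\beta\,r(g)\,g=0$ for every Lorentzian metric $g$. This is exactly the statement that the curvature part of $T$ is the zero natural tensor, so it contributes nothing to $T(g,\omega)$ for any $\omega$ and may be dropped. Note that I do not need $Ricci$ and $r\,g$ to be linearly independent as natural tensors — which they are not in dimension two — but only that the particular combination occurring in $T$ vanishes, and that is precisely what this substitution gives.

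Next I would kill the first-order part $L$, which is present only for $k=1,3$. Fix a point $x$ and work in a normal system for $g$ at $x$; by prescribing the first Taylor coefficients of the components of a $k$-form freely, I can produce a form with $\omega_x=0$ whose first normal tensor $(\nabla\omega)_x=\omega^1_x$ is an arbitrary element of $\Lambda_1=\Lambda^k T^*_xX\otimes T^*_xX$. At such a point the two quadratic tensors vanish, since they depend only on the value $\omega_x$, and the curvature part has already been removed; hence hypothesis 2 forces $L(g,\omega)_x=0$ for every choice of $(\nabla\omega)_x$. As $L$ is the image of $(\nabla\omega)_x$ under a fixed linear combination of the contractions listed in the Remark, letting $(\nabla\omega)_x$ run separately over its trace, symmetric and skew pieces shows by a short linear-algebra computation that each coefficient in $L$ must vanish.

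What then remains is exactly $T(g,\omega)=\mu_1\,\langle i_{-}\omega,i_{-}\omega\rangle_g+\mu_2\,\langle\omega,\omega\rangle_g\,g$, which is the assertion. The only delicate point is the last step: one must check, degree by degree, that the surviving first-order contractions are genuinely independent as $(\nabla\omega)_x$ varies over $\Lambda_1$, and this is precisely where the exceptional degrees $k=1$ and $k=3$ have to be handled case by case rather than by a single uniform argument.
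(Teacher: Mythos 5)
Your proof is correct and takes essentially the same route as the paper, whose own proof is just the one-line observation that condition 2 rules out $Ricci(g)$, $r(g)g$ and the exceptional first-order tensors from the preceding classification; you have simply fleshed out that elimination (evaluation at $\omega=0$ for the curvature part, prescription of $(\nabla\omega)_x$ with $\omega_x=0$ for the first-order part). One minor remark: the ``delicate point'' you flag at the end is not actually needed --- once hypothesis 2 forces $L(g,\omega)_x=0$ for every value of $(\nabla\omega)_x$, and $L_x$ depends only on $(\nabla\omega)_x$, it follows that $L$ is the zero natural tensor, so no case-by-case independence check of the individual contractions is required.
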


\begin{proof} Condition (2) rules out the tensors $\, Ricci(g)\,$ and $\, r(g)g\,$ in the previous proposition, as well as the other exceptional tensors in the cases $k=1,3$.

\end{proof}

\smallskip

\begin{thm}\label{caracterizacion} If a natural tensor $\, T\colon Metrics\times Forms_k\longrightarrow Tensors_2^0\,$ satisfies:
\begin{enumerate}
\item It is independent of the unit of scale: $\, T (\lambda^2 g , \lambda^{k-1} \omega) = T ( g , \omega)\,$ for all $\lambda>0$,

\item At any point, $\quad\omega_x = 0 \ \Rightarrow \ T(g,\omega)_x = 0$,

\item $\mathrm{div}_g \, T(g,\omega) = 0$ whenever $\omega$ is closed and co-closed, 
\end{enumerate}
then $\, T(g,\omega)\,$ is a constant multiple of the energy tensor:
$$E(g,\omega) :=(-1)^{k-1}\left( \langle i_{-} \omega , i_{-} \omega \rangle_g - \frac{1}{2}  \langle \omega , \omega \rangle_g \, g \right) .$$
\end{thm}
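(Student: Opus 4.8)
The plan is to build directly on the preceding Proposition and to extract the one missing relation between the two coefficients from the divergence hypothesis. By that Proposition, conditions \textit{1} and \textit{2} already force
$$ T(g,\omega) \,=\, \mu_1\,\langle i_-\omega, i_-\omega\rangle_g + \mu_2\,\langle\omega,\omega\rangle_g\, g $$
for universal constants $\mu_1,\mu_2$. Comparing with the definition of $E(g,\omega)$, it only remains to show that condition \textit{3} imposes $\mu_2=-\tfrac12\mu_1$, for then $T=(-1)^{k-1}\mu_1\,E$.

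Next I would compute the divergence of this $T$ on a closed and co-closed form. Since $\mathrm{div}\,(\langle\omega,\omega\rangle g)=\d\langle\omega,\omega\rangle$ and, by Corollary \ref{1.10}, $\mathrm{div}\,\langle i_-\omega,i_-\omega\rangle=(-1)^{k-1}(i_\omega\d\omega-i_{\partial\omega}\omega)+\tfrac12\d\langle\omega,\omega\rangle$, the hypotheses $\d\omega=0=\partial\omega$ kill the first two terms and leave
$$ \mathrm{div}\, T \,=\, \Big(\tfrac{\mu_1}{2}+\mu_2\Big)\,\d\langle\omega,\omega\rangle\ . $$
Condition \textit{3} says this vanishes for every closed and co-closed $\omega$, so the argument reduces to producing a single such $\omega$ with $\d\langle\omega,\omega\rangle\neq 0$; this yields $\tfrac{\mu_1}{2}+\mu_2=0$ and finishes the proof.

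Because $\mu_1,\mu_2$ are universal, it suffices to exhibit the required form on any convenient spacetime, and I would use flat Minkowski space with orthonormal coordinates $(x^0,\dots,x^n)$. Setting $f:=(x^0)^2+(x^1)^2$, a solution of the wave equation $\Box f=0$, I would take
$$ \omega \,:=\, \d f\wedge\d x^2\wedge\cdots\wedge\d x^k\ . $$
A short computation gives $\d\omega=0$ (since $\d\d f=0$ and the remaining factors are constant) and $\partial\omega=-(\Box f)\,\d x^2\wedge\cdots\wedge\d x^k=0$, whereas $\langle\omega,\omega\rangle=(-1)^{k-1}\,4\big((x^0)^2-(x^1)^2\big)$ is non-constant; hence $\d\langle\omega,\omega\rangle\neq0$, which forces $\mu_2=-\tfrac12\mu_1$ and $T=(-1)^{k-1}\mu_1\,E$.

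The main obstacle is exactly this existence step: one must secure a closed and co-closed $k$-form whose pointwise norm is not locally constant. The construction above needs $k\le n$, i.e. that $\omega$ is not of top degree. In the remaining case $k=1+n$ no such example exists, since a co-closed top-degree form necessarily has locally constant coefficient and hence constant norm, so condition \textit{3} carries no information. This case is nonetheless harmless: for top-degree forms one checks directly that $\langle i_-\omega,i_-\omega\rangle=\langle\omega,\omega\rangle\,g$, so the two tensors in the decomposition coincide and $T$ is automatically a multiple of $E=\tfrac{(-1)^{k-1}}{2}\langle\omega,\omega\rangle g$. I expect the verification of $\partial\omega=0$ and of this top-degree identity to be the only computational points needing care.
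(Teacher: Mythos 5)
Your proof is correct and follows essentially the paper's own route: the preceding proposition reduces $T$ to $\mu_1\langle i_-\omega,i_-\omega\rangle_g+\mu_2\langle\omega,\omega\rangle_g\, g$, and Corollary \ref{1.10} together with hypothesis (3) forces $\mu_2=-\mu_1/2$. In fact your write-up is more careful than the paper's, which tacitly assumes the existence of a closed and co-closed $k$-form with $\d\langle\omega,\omega\rangle\neq 0$: your Minkowski example supplies one for $k\leq n$, and your observation that $\langle i_-\omega,i_-\omega\rangle=\langle\omega,\omega\rangle\, g$ in the top-degree case $k=n+1$ (where no such form exists, so hypothesis (3) is vacuous) correctly settles the one case the paper's argument does not literally cover.
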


\begin{proof} By the previous proposition, there exist universal constants $\mu_1,\mu_2 \in \mathbb{R}$ such that:
$$T(g,\omega)\,=\,\mu_1\langle i_- \omega , i_- \omega \rangle+\mu_2\langle \omega , \omega \rangle \, g\ .$$

Then, writing $T=T(g,\omega)$,
\begin{align*}
\text{div}\, T\, &=\, \ \mu_1 \, \text{div}\, (\langle i_- \omega , i_- \omega \rangle)+\mu_2\, \text{div}\, (\langle \omega , \omega \rangle \, g) \\ 
&\overset{\ref{1.10}}{=}\,\mu_1\left( (-1)^{k-1}\left(i_\omega\d\omega-i_{\partial\omega}\omega\right)+\frac{1}{2}\d\langle \omega , \omega \rangle\right)+\mu_2\d\langle \omega , \omega\rangle\ 
\end{align*}

By hypothesis (3), if $\omega$ is closed and co-closed, then $\text{div}\, T=0$. Comparing with the previous equation, we have
$$0\,=\, \text{div}\, T\,=\, \mu_1\left( 0+\frac{1}{2}\, \d\langle \omega , \omega \rangle\right)+\mu_2\, \d\langle \omega , \omega\rangle\ ,$$
hence $\mu_2=-\mu_1/2$ and we conclude:
$$T\,=\,\mu_1\langle i_- \omega , i_- \omega \rangle-\frac{\mu_1}{2}\langle \omega , \omega \rangle \, g\,=\, \mu_1(-1)^{k-1}E\ .$$
\end{proof}

\smallskip 

We may reformulate the above theorem so as to eliminate the constant factor:

\smallskip

\begin{thm}\label{caracterizacion2} If a natural tensor $\, T\colon Metrics\times Forms_k\longrightarrow Tensors_2^0\,$ satisfies:
\begin{enumerate}
\item It is independent of the unit of scale: $T (\lambda^2 g , \lambda^{k-1} \omega) = T ( g , \omega)$ for all $\lambda>0$,

\item At any point, $\quad\omega_x = 0 \ \Rightarrow \ T(g,\omega)_x = 0$,

\item $\mathrm{div}_g \, T(g,\omega) = -i_{\partial\omega}\omega$ whenever $\omega$ is closed,
\end{enumerate}
then $\, T(g,\omega)\,$ coincides with the energy tensor $\,E(g,\omega)$.
\end{thm}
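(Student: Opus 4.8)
The plan is to bootstrap from the already-proved Theorem~\ref{caracterizacion}, whose hypotheses are formally weaker here, and then to fix the single remaining constant by comparing divergences on an explicit closed form.

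First I would verify that the present hypotheses imply those of Theorem~\ref{caracterizacion}. Hypotheses 1 and 2 coincide, so only hypothesis 3 requires attention: if $\omega$ is closed \emph{and} co-closed, then $\partial\omega=0$, hence $i_{\partial\omega}\omega=0$, and the present condition 3 reduces to $\mathrm{div}_g\,T(g,\omega)=0$. This is precisely condition 3 of Theorem~\ref{caracterizacion}, so that theorem applies and yields a universal constant $c\in\mathbb{R}$ with $T(g,\omega)=c\,E(g,\omega)$.

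Next I would determine $c$ using closed but \emph{not} co-closed forms, where the divergence of the energy tensor is genuinely nonzero. By Proposition~\ref{divergence}, $\mathrm{div}\,E=i_\omega\d\omega-i_{\partial\omega}\omega$, so for closed $\omega$ this becomes $\mathrm{div}\,E=-i_{\partial\omega}\omega$. Combining with $T=cE$ gives, for every closed $\omega$,
$$\mathrm{div}\,T\,=\,c\,\mathrm{div}\,E\,=\,-c\,i_{\partial\omega}\omega,$$
whereas hypothesis 3 asserts $\mathrm{div}\,T=-i_{\partial\omega}\omega$. Subtracting, $(c-1)\,i_{\partial\omega}\omega=0$ for every closed $k$-form $\omega$.

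The remaining, and only computational, point, which I expect to be the main (though modest) obstacle, is to exhibit one closed $k$-form with $i_{\partial\omega}\omega\not\equiv 0$; this forces $c=1$ and finishes the proof. I would work in a flat Lorentzian chart and take $\omega=z_1\,\d z_1\wedge\dots\wedge\d z_k$, which is manifestly closed since $\d z_1\wedge\d z_1=0$. A short computation from the local formula for $\partial$ then gives $\partial\omega=\d z_2\wedge\dots\wedge\d z_k$ up to sign, and hence $i_{\partial\omega}\omega=z_1\,\d z_1$ up to sign, which is not identically zero. Because naturality makes $c$ a universal constant independent of the spacetime and of $\omega$, this single nonvanishing example forces $c=1$, that is, $T(g,\omega)=E(g,\omega)$.
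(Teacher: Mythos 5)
Your proof is correct and follows the same route as the paper: reduce to Theorem~\ref{caracterizacion} (whose hypothesis 3 follows from yours, since closed plus co-closed gives $\partial\omega=0$ and hence $i_{\partial\omega}\omega=0$), then fix the constant by comparing divergences through Proposition~\ref{divergence}. You in fact go one small step beyond the paper, which simply asserts ``that constant has to be one'': your explicit closed form $\omega=z_1\,\d z_1\wedge\dots\wedge\d z_k$, with $i_{\partial\omega}\omega=\pm\, z_1\,\d z_1\not\equiv 0$, supplies the closed, non-co-closed example that this assertion implicitly requires (valid for $k\le n$; for the top degree $k=n+1$ one would use instead $z_1\,\d z_0\wedge\d z_1\wedge\dots\wedge\d z_n$).
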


\begin{proof} It is clear that $T$ satisfies the hypotheses of the previous theorem, so  $T$ is a constant multiple of the energy tensor $E$. As both tensors have the same divergence whenever $\omega$ is closed (see Proposition \ref{divergence}), that constant has to be one.

\end{proof}

\smallskip

\begin{rem}
Let $ClosedForms_k$ be the sheaf of closed $k$-forms on $X$. We state, without proof, another variation of the previous result:

\medskip

\noindent {\bf Theorem.} {\it 
Let  $\, T:Metrics\times ClosedForms_k\longrightarrow Tensors_{2}^0\,$ be a natural operator. If it satisfies:
\begin{enumerate}
\item It is independent of the unit of scale: $T (\lambda^2 g , \lambda^{k-1} \omega) = T ( g , \omega)$ for all $\lambda>0$.

\item At any point, $\quad\omega_x = 0 \ \Rightarrow \ T(g,\omega)_x = 0$.

\item $\mathrm{div}_g \, T(g,\omega) = -i_{\partial\omega}\omega$.
\end{enumerate}
then $\, T(g,\omega)\,$ coincides with the energy tensor $\,E(g,\omega)$.}
\end{rem}

\section{Electromagnetism of $p$-branes}

There exists  a theory of electromagnetism for charged $p$-branes (\cite{Henneaux}), where the electromagnetic field  is represented by a differential $(p+2)$-form $F$. In the rest of the paper, we shall extend this theory up to the point of including a force law for fluids of charged $p$-branes and an electromagnetic energy tensor, necessary to state the Einstein equation. This tensor is precisely the energy tensor of the form $F$ introduced in Definition \ref{energytensor}.

In this section we analyse the interaction of a charged $p$-brane with an arbitrary electromagnetic field. Our analysis is developed at the classical (non quantum) level and, in contrast to  \cite{Henneaux}, it is based on the elementary concepts of impulse and  acceleration of a $p$-brane.

\smallskip

From now on, let us fix an integer $p$, such that $0 \leq p \leq n$, and let us write $q := n-p$.


\begin{defn}
 The \textbf{trajectory} 
\textbf{of a $p$-brane} is, by definition, an oriented smooth submanifold $S \subset X$ of 
 dimension $p+1$, whose metric $g_{|S}$ has signature $(+, - , \stackrel{p}{\ldots} , - )$.

 Associated to any $p$-brane, we also assume two constants, called \textbf{tension} $\mathfrak{t}>0$ 
(or \textit{mass}, in the case $p=0$ of punctual particles), and \textbf{electric charge} $\mathfrak{q} \in \mathbb{R}$.
\end{defn}

\subsection{Impulse form of a $p$-brane}

In absence of external forces, the trajectory of a punctual particle is a geodesic of spacetime. To extend this fundamental principle to the movement of a $p$-brane, let us recall two different characterizations of geodesics: 

\begin{enumerate}
\item The trajectory of a particle is a geodesic if the impulse vector $mU$ is parallel along the trajectory (where $m$ is the mass of the particle and $U$ is the unitary tangent vector to the trajectory).

\item The trajectory of a particle is a geodesic if it minimizes the action $m\int\text{d}\tau$, where $\tau$ is the proper time of the trajectory.
\end{enumerate}

To determine the movement of a $p$-brane in absence of external forces, it is common in the literature to follow the second approach, using variational principles. To be precise, the generalized action is the Nambu-Goto action, $\,\mathfrak{t}\int_S\text{d}S$, where $\text{d}S$ is the  $(p+1)$-volume of the trajectory $S$ of the brane.  

Instead of that, in this paper we generalize the concept of impulse to a $p$-brane, arriving to the same equations of motion.\medskip

Let $S \subset X$ be the trajectory of a $p$-brane and let $\d S$ be the $(p+1)$-volume form of $S$. Rising the first index of $\d S$ and multiplying it by the tension $\mathfrak{t}$, we obtain 
a $p$-form with values on tangent vectors, that is called the 
\textbf{impulse form} of $S$. In other words,

\smallskip

\begin{defn}
 The \textbf{impulse form} of a $p$-brane $S$ is the $p$-form on $S$ with values on $TS$:
$$\begin{CD}
 \Pi_S \ \colon \ TS \wedge \stackrel{p}{\ldots} \wedge TS @>>>
 TS \subset (TX)_{|S}
 \end{CD}$$
  defined by the following property:
$$ g(D_0, \Pi_S (D_1 , \ldots , D_p))\, =\, \mathfrak{t} \, \d S (D_0 , \ldots , D_p) $$ for any $D_0 , \ldots , D_p$ tangent vectors to $S$.\goodbreak

If $D_0 , \ldots , D_p$ is an orthonormal frame of vector fields on $S$, where the matrix of $g_{|S}$ is diagonal (+1, -1, \ldots , -1), 
then:
$$ \Pi_S \,=\, \mathfrak{t}\, \sum_{j=0}^p (i_{D_j} \d S) \otimes \delta_j D_j   $$ where $\delta_0 = 1$ and $\delta_{j} = -1$ for $j \neq 0$.
\end{defn}

\begin{example}
In the case of a particle $(p=0)$, the trajectory $S$ is a curve and the impulse form $\Pi_S$ is a 
vector-valued 0-form; that is, it is simply a tangent vector $\Pi_S = m U$, where $m$ is the mass of the particle and $U$ is the unitary tangent vector to the curve.
\end{example}

\begin{example}\label{PrimerMinkowski}
 If $(X = \mathbb{R}^{1+n} , g = \d t^2 - \sum_i \d x_i^2 )$ is the Minkowski spacetime, then the impulse form of a $p$-brane $\,S\,$ can be written as:
$$ \Pi_S \,=\, \omega_0 \otimes \partial_t + \omega_1 \otimes \partial_{x_1} + \ldots + \omega_n \otimes \partial_{x_n} $$ for some ordinary 
differential $p$-forms $\omega_i$ on $S$.\goodbreak 

If $S_{t_0} := S \cap \{ t = t_0 \} $ is the particle at the instant $t_0$, then the vector:
$$ \int_{S_{t_0}} \Pi_S := \left( \int_{S_{t_0}} \omega_0 \right) \partial_t + \ldots + \left( \int_{S_{t_0}} \omega_n \right) \partial_{x_n} $$
can be understood as the total energy-impulse vector of the $p$-brane at $t_0$.

The differential $p$-form $\omega_0$ is called \textbf{energy form} of the brane respect to the chosen 
inertial frame $(t, x_1, \ldots , x_n)$, and the integral:
$$ \int_{S_{t_0}} \omega_0 $$ is understood as the total energy of the brane at $t_0$.

Moreover, if the $p$-brane is at apparent rest at $t_0$ (that is, $\partial_t$ is tangent to $S$ at the points $t = t_0$) then it is easy to check  the total energy of the brane at $t_0$ is equal to:
$$ \int_{S_{t_0}} \omega_0 \,=\, \mathfrak{t} \cdot ( \mbox{Volume of } S_{t_0} )  \ .$$
\end{example}

\smallskip


\begin{defn}
Let $S \subset X$ be the trajectory of a $p$-brane, and let us write $\nabla $ for the Levi-Civita connection of $(X,g)$.
For any pair of tangent vector fields $D,D'$ on $S$, the covariant derivative $\nabla_D D'$ decomposes as a tangent vector to $S$ plus a vector orthogonal to $S$:
$$ \nabla_D D' \,=\, \text{tang}( \nabla_D D') + \text{nor} ( \nabla_D D') \ .$$

The first addend $\overline{\nabla}_D D' := \text{tang}(\nabla_D D') $ is precisely the covariant derivative with respect to the Levi-Civita connection
$\overline{\nabla}$ of the submanifold $(S,g_{|S})$.

The second addend is, by definition, the \textbf{second fundamental form} of $S$, which is a symmetric tensor with values on 
the normal bundle of $S$:
$$ \Phi_S \colon TS \times TS \longrightarrow (TS)^{\perp} \quad , \quad \Phi_S (D, D') :=\,\text{nor}(\nabla_D D') \ .  $$

Therefore, the \textbf{trace} of the second fundamental form, $\mathrm{tr}\, \Phi_S$, is a field of normal vectors to $S$.
\end{defn}

\smallskip

\begin{prop}\label{DiferencialImpulso}
 The impulse form $\Pi_S$ of a $p$-brane $S$ satisfies: 
$$ \d_\nabla \Pi_S \,=\, \d S \, \otimes \, \mathfrak{t}\cdot \mathrm{tr} (\Phi_S) \ .$$
\end{prop}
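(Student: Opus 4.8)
The plan is to verify the identity pointwise, at an arbitrary $x\in S$, by evaluating both $TX$-valued $(p+1)$-forms on a single adapted frame and then invoking tensoriality. Concretely, I would fix $x\in S$ and choose a positively oriented orthonormal frame $(D_0,\dots,D_p)$ of $TS$ on a neighbourhood of $x$ that is geodesic at $x$ for the induced connection $\overline\nabla$, i.e. $\overline\nabla_{D_i}D_j|_x=0$ for all $i,j$. Such a frame exists (parallel-transport an orthonormal basis of $T_xS$ along the $\overline\nabla$-geodesics emanating from $x$), is orthonormal throughout the neighbourhood, and has $g_{|S}=\mathrm{diag}(\delta_0,\dots,\delta_p)$ with $\delta_0=1$ and $\delta_j=-1$ for $j\neq 0$. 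Since both sides of the asserted equality are tensors, it suffices to check that they agree on the tuple $(D_0,\dots,D_p)$ at $x$.

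Next I would expand the left-hand side by the Koszul formula for the exterior covariant derivative of $\Pi_S=\mathfrak{t}\sum_j (i_{D_j}\d S)\otimes\delta_jD_j$:
$$(\d_\nabla\Pi_S)(D_0,\dots,D_p)=\sum_{k=0}^p(-1)^k\nabla_{D_k}\bigl(\Pi_S(D_0,\dots,\widehat{D_k},\dots,D_p)\bigr)+\sum_{i<j}(-1)^{i+j}\Pi_S([D_i,D_j],\dots)\ .$$
At $x$ the bracket terms drop out, since $[D_i,D_j]=\overline\nabla_{D_i}D_j-\overline\nabla_{D_j}D_i=0$ there and $\Pi_S$ is $C^\infty$-linear in its arguments. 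For the first sum, the key combinatorial point is that $(i_{D_j}\d S)(D_0,\dots,\widehat{D_k},\dots,D_p)=\d S(D_j,D_0,\dots,\widehat{D_k},\dots,D_p)$ vanishes unless $j=k$ (otherwise an argument is repeated), and for $j=k$ it equals $(-1)^k\d S(D_0,\dots,D_p)=(-1)^k$. These coefficient functions are moreover \emph{constant} on the neighbourhood, because the frame is orthonormal everywhere; hence $\Pi_S(D_0,\dots,\widehat{D_k},\dots,D_p)=\mathfrak{t}(-1)^k\delta_kD_k$ as a vector field, and applying $\nabla_{D_k}$ differentiates only the vector part.

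Combining these observations gives $(\d_\nabla\Pi_S)(D_0,\dots,D_p)=\mathfrak{t}\sum_k\delta_k\,\nabla_{D_k}D_k$ at $x$. By the Gauss formula $\nabla_{D_k}D_k=\overline\nabla_{D_k}D_k+\Phi_S(D_k,D_k)$, and since the frame is $\overline\nabla$-geodesic at $x$ the tangential term vanishes, leaving $\mathfrak{t}\sum_k\delta_k\Phi_S(D_k,D_k)=\mathfrak{t}\,\mathrm{tr}\,\Phi_S$, the trace being taken with respect to $g_{|S}=\mathrm{diag}(\delta_k)$. On the other side, the right-hand side evaluated on the same tuple is $\d S(D_0,\dots,D_p)\cdot\mathfrak{t}\,\mathrm{tr}\,\Phi_S=\mathfrak{t}\,\mathrm{tr}\,\Phi_S$. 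The two coincide at $x$, and as $x$ is arbitrary the identity holds on all of $S$.

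I expect the only delicate points to be the bookkeeping of signs in the Koszul formula and in reordering the arguments of $\d S$, together with the verification that the coefficients $(i_{D_j}\d S)(D_0,\dots,\widehat{D_k},\dots,D_p)$ are genuinely constant for this frame, so that $\nabla_{D_k}$ acts only on $\delta_kD_k$. Once the adapted geodesic frame is set up, the whole computation collapses to the single Gauss-formula identity above, which also makes transparent why $\d_\nabla\Pi_S$ is purely normal.
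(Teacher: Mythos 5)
Your proof is correct, and it takes a genuinely different route from the paper's. The paper works at an arbitrary point with an arbitrary orthonormal frame $(D_0,\dots,D_p)$ on $S$, extended to an ambient orthonormal frame of $X$; it expands $\d_\nabla\Pi_S$ by the Leibniz rule $\d_\nabla(\alpha\otimes Z)=\d\alpha\otimes Z+(-1)^p\alpha\wedge\d_\nabla Z$, uses $\d(i_{D_j}\d S)=(\mathrm{div}_S D_j)\,\d S$, and then carries out a cancellation computation with the connection $1$-forms $\omega_{ij}$ until only $\sum_j\delta_j\bigl(\nabla_{D_j}D_j-\overline\nabla_{D_j}D_j\bigr)=\mathrm{tr}\,\Phi_S$ survives. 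You instead argue pointwise: since both sides are tensorial $(p+1)$-forms on the $(p+1)$-dimensional manifold $S$, it suffices to compare them on one basis tuple at each point, and your radial parallel frame (orthonormal everywhere, $\overline\nabla$-geodesic at $x$) kills both the bracket terms in the intrinsic Koszul-type formula for $\d_\nabla$ and the tangential parts $\overline\nabla_{D_k}D_k$ at $x$, so the statement collapses to the Gauss formula $\nabla_{D_k}D_k=\overline\nabla_{D_k}D_k+\Phi_S(D_k,D_k)$. Your observation that the coefficients $(i_{D_j}\d S)(D_0,\dots,\widehat{D_k},\dots,D_p)=\delta_{jk}(-1)^k$ are constant on the whole neighbourhood, because the frame stays orthonormal and oriented there, is precisely what legitimizes differentiating only the vector part, and you were right to single it out. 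What your approach buys: no extension of the frame to the ambient manifold, no connection-form bookkeeping, and a transparent explanation of why $\d_\nabla\Pi_S$ is purely normal. What the paper's approach buys: a computation valid in any orthonormal frame, proceeding directly from the Leibniz-rule definition of $\d_\nabla$; note that your argument implicitly invokes the equivalence between that definition and the intrinsic formula $(\d_\nabla\Pi_S)(D_0,\dots,D_p)=\sum_k(-1)^k\nabla_{D_k}\bigl(\Pi_S(D_0,\dots,\widehat{D_k},\dots,D_p)\bigr)+\sum_{i<j}(-1)^{i+j}\Pi_S([D_i,D_j],\dots)$ for bundle-valued forms, which is standard but is the one ingredient you should state or verify explicitly.
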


\begin{proof}
 Let $(D_0, \ldots , D_n)$ be an orthonormal basis of vector fields on $X$, such that $(D_0 , \ldots , D_p)$ is an orthonormal basis of vector fields on $S$.
 
 Let us write
 $$ \d_\nabla D_j \,=\, \sum_{i=0}^n \omega_{ij} \otimes D_i $$ where the $\omega_{ij}$ are the connection 1-forms.
 
 Consequently,
\begin{align*}
 \nabla_{D_j} D_j &=  \sum_{i=0}^n \omega_{ij} (D_j) \, D_i \ , \quad  \mbox{ and } \quad 
 \overline{\nabla}_{D_j} D_j = \sum_{i=0}^p \omega_{ij} (D_j) \, D_i \ , \quad j \leq p \ , \\
 \mathrm{div}_S D_j &= \textit{contr}\,( \d_{\overline{\nabla}} D_j ) = \textit{contr} \left( \sum_{i=0}^p \omega_{ij} \otimes D_i \right) = \sum_{i=0}^p \omega_{ij} (D_i) \ ,
\end{align*} where $contr$ denotes the contraction of the contravariant and covariant indexes.

Using these formulae, 
and taking $\mathfrak{t}=1$:
\begin{align*}
 \d_\nabla \Pi_S &=   \sum_{j=0}^p \d ( i_{D_j} \d S) \otimes \delta_j D_j + (-1)^p \sum_{j=0}^p (i_{D_j} \d S) \wedge \delta_j \d_\nabla D_j \\
 &= \sum_{j=0}^p (\mathrm{div}_S D_j) \, \d S \otimes \delta_j D_j + (-1)^p \sum_{j=0}^p \sum_{i=0}^n (i_{D_j} \d S) \wedge \delta_j \omega_{ij} \otimes D_i \\
 &= \d S \otimes \sum_{j=0}^p \delta_j ( \mathrm{div}_S D_j) D_j + \sum_{j=0}^p \sum_{i=0}^n \delta_j \omega_{ij} (D_j)) \d S \otimes D_i \\
 \end{align*}
(applying the formulae for $\mathrm{div}_S D_j$ and $\nabla_{D_j}D_j$)
\begin{align*}
&= \d S \otimes \sum_{j=0}^p \sum_{i=0}^p \delta_j \omega_{ij}(D_i)D_j  + \sum_{j=0}^p \d S \otimes \delta_j \nabla_{D_j} D_j \\
& = - \d S \otimes \sum_{j=0}^p \sum_{i=0}^p \delta_i \omega_{ji} (D_i) D_j + \d S \otimes \sum_{j=0}^p \delta_j 
\nabla_{D_j} D_j \\
& = - \d S \otimes \sum_{i=0}^p \delta_i \overline{\nabla}_{D_i} D_i + \d S \otimes \sum_{j=0}^p \delta_j \nabla_{D_j} D_j \\
& = \d S \otimes \sum_{j=0}^p \delta_j \left( \nabla_{D_j} D_j - \overline{\nabla}_{D_j} D_j \right) = \d S \otimes \sum_{j=0}^p \delta_j \Phi_S (D_j , D_j) = \d S \otimes \mathrm{tr}\, \Phi_S \ . 
\end{align*}

\end{proof}\goodbreak

\smallskip

\begin{example}
In case $p=0$, let $S$ be the trajectory  of a particle with impulse $\Pi_S = m U$, where $m$ is the mass of the particle and $U$ is the future-pointing unitary tangent vector of the curve $S$. 

Since:
$$\Phi_S(U,U)=\text{nor}(\nabla_UU)=\nabla_UU=\nabla_{\partial_\tau}U$$
we observe $\mathrm{tr}\,\Phi_S=\nabla_{\partial_\tau}U$ is the acceleration vector of the particle.
\end{example}

\begin{defn}
By analogy with the particle case just explained, if $S$ is the trajectory of a 
$p$-brane, then the normal vector $\mathrm{tr}\, \Phi_S$ is interpreted as the \textbf{acceleration} of the brane.
\end{defn}

If there are no external forces, the trajectory $S$ of a brane should have null acceleration. For a particle, this amounts to 
saying that it is a geodesic: $\nabla_{\partial_{\tau}} U = 0$. For a $p$-brane, this amounts to the equation:
$$ \mbox{ {\bf Inertial Motion:} } \quad \boxed{\phantom{\frac{1}{2}} \mathrm{tr}\, \Phi_S = 0 \  \, } \ .$$  

By \ref{DiferencialImpulso}, this equation is equivalent to
$\,\d_\nabla \Pi_S = 0$, which is an infinitesimal conservation law for the impulse.

\smallskip

\begin{rem}
The equation $\mathrm{tr}\, \Phi_S = 0$ is precisely the Euler-Lagrange equation for the variational problem defined by the Nambu-Goto action.
\end{rem}

\subsection{Electromagnetic field}

\smallskip

\begin{defn}\label{4.10}
 An \textbf{electromagnetic field} over the spacetime $X$ is a
skew-symmetric tensor:
$$\widehat F \, \colon \, TX \wedge \stackrel{p+1}{\ldots\ldots} \wedge TX \longrightarrow TX $$ satisfying the following property:
$$ \widehat F (D_0 , \ldots , D_p) \, \in \ < D_0 , \ldots , D_p >^{\perp} $$ for any collection $D_0, \ldots , D_p$ of vector fields on $X$.\smallskip

The value $\widehat F (D_0 , \ldots , D_p)_x$ may be understood as the force at the point $x$ that suffers a brane with $(p+1)$-volume vector 
$D_0 \wedge \ldots \wedge D_p$ and unitary charge (see the force law below).

The definition of $\widehat F$ amounts to saying that the tensor:
$$F (D_0 , \ldots , D_{p+1}) := \, g(\widehat F (D_0 , \ldots , D_p) , D_{p+1}) $$ is a $(p+2)$-differential form on $X$, and we will say 
that $F$ is the \textbf{$(p+2)$-form of the electromagnetic field}.
\end{defn}

\subsubsection*{Force Law for a $p$-brane} 

Let $S$ be the trajectory of a $p$-brane with tension $\mathfrak{t}$ and electric charge $\mathfrak{q}$.

\begin{defn}
 The \textbf{charge-current vector} of this brane is the only $(p+1)$-vector $J_S$ on $S$ satisfying $$\d S (J_S) \,=\, \mathfrak{q} \ . $$

If $(D_0, \ldots , D_p)$ is an oriented orthonormal frame of vector fields on $S$, then:
$$ J_S \,=\, \mathfrak{q}\,  D_0 \wedge \ldots \wedge D_p \ . $$
\end{defn}

\smallskip

Let $\widehat{F}$ be an electromagnetic force and assume that the $p$-brane $S$ does not substantially modify the electromagnetic field. Nevertheless, the $p$-brane $S$ does suffer an acceleration due to the electromagnetic force $\widehat F$, that we postulate to be governed by the following equation:

$$ \mbox{\textbf{Lorentz Force Law}:} \quad \boxed{ \ \d_\nabla \Pi_S \, =\, \d S \otimes \widehat F(J_S) \ } \ . $$

Using Proposition \ref{DiferencialImpulso}, this equation is equivalent to $\mathfrak{t} \cdot \mathrm{tr}\, \Phi \,=\, \widehat F (J_S)$, which, substituting the value of $J_S$, is in turn equivalent to: 

$$\mathfrak{t} \cdot \mathrm{tr}\, \Phi \,=\, \mathfrak{q} \cdot \, \widehat F (D_0, \ldots , D_p)\  .$$

\smallskip

Observe the typical form of this equation: 
$ mass \, \times \,  acceleration = force$; the definition \ref{4.10} of $\widehat F$ has been dictated by the need of giving sense to this expression.


\smallskip

\begin{example}
In the case $p=0$, the charge-current vector of a particle is simply a vector $J_S = \mathfrak{q}\, U$, 
where $U$ is the future-pointing unitary  tangent vector of the  trajectory $S$ of the particle.

Since the impulse of the particle is $mU$, the force law reads:
$$ \d_\nabla ( mU ) \,=\, \d \tau \otimes\mathfrak{q} \, \widehat F (U) $$ 
where $\tau$ stands for the proper time of the curve. 

As $\d_\nabla U = \d \tau \otimes \nabla_{\partial_\tau} U$, this force law is equivalent to the equation:
$$ m \cdot \nabla_{\partial_{\tau}} U \,=\, \mathfrak{q} \cdot \widehat F (U) $$ which is precisely the classical
\textbf{Lorentz Force Law} for a particle of mass $m$ and charge $\mathfrak{q}$.
\end{example}

\smallskip

\begin{rem}

Let $(X=\mathbb{R}^{n+1}, g=\d t^2-\sum^n_1\d x_i^2)$ be the Minkowski space-time. The trajectory of a $p$-brane $S$ can be written as $x_i = f_i (t, u_1, \ldots , u_p)$, where $(t, u_1, \ldots , u_p)$ are local coordinates on $S$.
 
 On these coordinates $(t,u_1, \ldots , u_p)$, the force law $\mathfrak{t}\cdot \mathrm{tr}\, \Phi = \widehat{F} (J_S) $ produces a system of second order partial differential equations, that is quasi-linear and hyperbolic. 
 
 For these kind of systems, the Cauchy problem has a unique local solution (\cite{Taylor}, Proposition 3.2), so the force law uniquely determines the trajectory of the $p$-brane, for adequate initial conditions. 
\end{rem}

\subsection{Maxwell equations}

\begin{defn}
A distribution of charged $p$-branes on the spacetime $X$ will be represented by means of a differential $q$-form $C$, called the \textbf{charge-density form}.
\end{defn}

The physical meaning of this $q$-form is the following (recall $q=n-p$): Given $q$ linearly independent vectors $D_1, \ldots , D_q \in T_xX$, that we understand as an oriented infinitesimal parallelepiped at the point $x$, we have:
$$ C(D_1 , \ldots , D_q) = \left\{ 
\begin{array}{c}
 \mbox{Sum, affected with a sign, of the charges of the p--branes } \\
 \mbox{ transversally crossing the parallelepiped }
\end{array} \right\}  \ . $$

We say that a $p$-brane with a trajectory $S$ transversally crosses the parallelepiped $D_1, \ldots , D_q$ whenever $T_x X = T_x S \oplus \langle D_1, \ldots , D_q \rangle$. 
If the orientation of $T_xX$ coincides with the product of the orientations on $T_xS $ and $\langle D_1 , \ldots , D_q \rangle$, then the charge of the $p$-brane counts with positive sign; otherwise, we affect the charge with a negative sign.

\smallskip

\begin{defn}
 The \textbf{charge-current $(p+1)$-vector} of a distribution of charged $p$-branes is the only $(p+1)$-vector $J$ satisfying:
 $$ i_J \d X \,=\, C \ .$$

 Equivalently, if $J^*$ is the $(p+1)$-form metrically equivalent to $J$ and $*$ stands for the Hodge operator,
 $$ J^* \,=\, (-1)^{pn} \, *C \ .$$
\end{defn}

\begin{example}
When $p=0$, the charge-density form $C$ is a differential $n$-form, and the charge-current vector $J$ is simply a vector field on $X$.

In this case, the electromagnetic field $F$ is a 2-form, related to the distribution of charges by the Maxwell equations:
$$ \d F = 0 \qquad , \qquad \partial F = J^* \ . $$
\end{example}

\bigskip

Let us consider a distribution of charged $p$-branes, represented by a charge-density $q$-form $C$ or, equivalently, by a charge-current $(p+1)$-vector $J$. Such a distribution of charges ``produces" an electromagnetic field, represented by a $(p+2)$-form $F$. 
By analogy with the particle case, we postulate that both fields are related by the following:
$$ \textbf{Maxwell Equations:} \qquad \boxed{ \phantom{\frac{2}{1}} \d F \,=\, 0 \quad , \quad \partial F\, = \, J^*  \ }$$ 
or equivalently
$$   \d F\, =\, 0 \quad , \quad \d (*F) \,=\, (-1)^p C  \ . $$

\smallskip

\begin{rem}
The second Maxwell equation implies an infinitesimal charge conservation law $\d C = 0$ (equivalently, $\partial J^* = 0 $).
\end{rem}\smallskip

\begin{rem} The operators $\d$ and $\partial$ are, essentially, the only first-order natural linear differential operators between differential forms. Therefore, in a certain sense, Maxwell equations are the only possible first-order equations that may arise.
 \end{rem}

\smallskip

\noindent\textbf{Variational principles.} In a similar vein to what is done for charged particles ($p=0$), the Lorentz force law and the Maxwell equations may be derived from variational principles, as follows.

Let us write $F=\text{d}A$ where $A$ is a $(p+1)$-form on spacetime, called the \textbf{electromagnetic potential}. 
 For each trajectory $S$ of a $p$-brane  with tension $\mathfrak{t}$ and electric charge $\mathfrak{q}$, consider the action:
$$\mathcal{A}(S):=\,-\mathfrak{t}\int_S\text{d}S+(-1)^p\mathfrak{q}\int_SA\ .$$
Extremals of this action are precisely the trajectories that satisfy the Lorentz force law.

On the other hand, let us fix a closed $q$-form $C$ on the spacetime $X$. For any $(p+1)$-form $A$, consider the action:
$$\mathcal{A}(A):=\, \int_X\frac{1}{2}F\wedge *F-\int_XA\wedge C  \ ,$$ where $F:=\text{d}A$. 
The Euler-Lagrange equations for this action amount to the Maxwell equation $\,\partial F=J^*$, where $i_J \d X = C$.

\section{Fluid of charged $p$-branes}

Now we shall extend in a natural way the notion of impulse of a $p$-brane and the force law to the case of a fluid of charged $p$-branes. Via \ref{energytensor}, we shall define the electromagnetic energy tensor associated to the electromagnetic field strength $F$, which is necessary to formulate the Einstein equation.


\subsection{Impulse form and force law for a fluid}

The following lemma is easy to check (v.gr. \cite{LovelockAlberto}, Lemma 2.4):

\begin{lem}\label{FormasYTensores}
 The following linear map is an isomorphism:
 $$ TX \otimes TX \xrightarrow{\  \ } \Lambda^nX \otimes TX \quad , \quad T^2 \mapsto C_1^1 ( \d X \otimes T^2) $$ where $C_1^1$ denotes the contraction between the first covariant and first contravariant indices.
 
 Moreover, if $T^2$ is a 2-contravariant tensor on $X$ and $\Pi_n := C_1^1( \d X \otimes T^2)$ is the corresponding vector-valued $n$-form, then:
 $$ \d_\nabla \Pi_n \,=\, \d X \otimes \mathrm{div}\, T^2 \ .$$
\end{lem}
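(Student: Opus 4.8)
The plan is to treat the two assertions separately: the isomorphism is pointwise linear algebra, while the divergence formula is a pointwise tensor identity best verified in a well-chosen frame.

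For the isomorphism I would note that the map factors as $V\otimes W\mapsto (i_V\,\d X)\otimes W$, that is, it is the tensor product of the contraction $c\colon TX\to\Lambda^nX$, $V\mapsto i_V\,\d X$, with the identity on the second factor. Since $\d X$ is a nowhere-vanishing top-degree form, $i_V\,\d X=0$ forces $V=0$, so $c$ is a fibrewise injection between bundles of equal rank $1+n$, hence an isomorphism; therefore $c\otimes\mathrm{id}$ is an isomorphism as well. A dimension count, $\dim(TX\otimes TX)=(1+n)^2=\binom{1+n}{n}(1+n)=\dim(\Lambda^nX\otimes TX)$, confirms this. (The same remark identifies which index of $T^2$ is consumed, so that the relevant divergence contracts the first contravariant index.)

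For the divergence formula I would write $T^2=T^{ij}e_i\otimes e_j$ in a local frame $(e_i)$ with dual coframe $(\theta^i)$, so that $\Pi_n=\sum_{i,j}T^{ij}(i_{e_i}\d X)\otimes e_j$. As both $\d_\nabla\Pi_n$ and $\d X\otimes\mathrm{div}\,T^2$ are values of tensor fields, it suffices to check equality at an arbitrary point $x$, and there I would use the coordinate frame of a normal system for $g$ at $x$, so that the Christoffel symbols vanish at $x$. Applying the Leibniz rule $\d_\nabla(\alpha\otimes V)=\d\alpha\otimes V+(-1)^{\deg\alpha}\alpha\wedge\nabla V$ (the sign convention of Proposition \ref{DiferencialImpulso}, with $\deg(i_{e_i}\d X)=n$), together with the two elementary identities $\d(i_V\,\d X)=(\mathrm{div}\,V)\,\d X$ (from $\mathcal{L}_V\d X=\d\,i_V\d X$) and $\theta^k\wedge i_{e_i}\d X=\delta^k_i\,\d X$, all terms containing $\nabla e_j$ or $\mathrm{div}\,e_i$ vanish at $x$, leaving $\sum_{i,j}(e_iT^{ij})\,\d X\otimes e_j$. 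Since the connection vanishes at $x$ one has $(\mathrm{div}\,T^2)^j(x)=\nabla_aT^{aj}(x)=e_iT^{ij}(x)$, so the identity holds at $x$, hence everywhere.

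The main obstacle is purely the bookkeeping in the second part: one must check that the non-derivative contributions genuinely reassemble into the full covariant divergence $\nabla_aT^{aj}$ rather than the bare frame derivative $e_iT^{ij}$. In an arbitrary frame this requires the identities $\mathrm{div}\,e_i=\sum_k\Gamma^k_{ki}$ and $(i_{e_i}\d X)\wedge\theta^k=(-1)^n\delta^k_i\,\d X$ and a careful matching of the two connection contributions $\Gamma^a_{ad}T^{dj}$ and $\Gamma^j_{ad}T^{ad}$ against the Leibniz term and the $\d(i_{e_i}\d X)$ term, with the $(-1)^n$ factors cancelling. Reducing to a normal frame at each point sidesteps this tracking entirely, isolating the single surviving term and making the cancellation transparent, so that is the route I would follow.
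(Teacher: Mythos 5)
Your proof is correct, and it is essentially complete. Note that the paper itself offers no proof of this lemma at all: it is introduced as ``easy to check'' with a pointer to \cite{LovelockAlberto}, Lemma 2.4, so there is no internal argument to compare against; your write-up supplies the missing verification. Both halves are sound: the factorization of the map as $c\otimes\mathrm{id}$ with $c(V)=i_V\,\d X$ a fibrewise injection between bundles of equal rank $1+n$ settles the isomorphism, and for the divergence identity your reduction to a normal system at a point (where the Christoffel symbols, $\nabla e_j$ and $\mathrm{div}\,\partial_i=\Gamma^k_{ki}$ all vanish) together with the identities $\d(i_V\,\d X)=(\mathrm{div}\,V)\,\d X$ and $\theta^k\wedge i_{e_i}\,\d X=\delta^k_i\,\d X$ yields exactly $\d_\nabla\Pi_n=\d X\otimes\mathrm{div}\,T^2$ pointwise, with the sign convention for $\d_\nabla$ consistent with the one used in Proposition \ref{DiferencialImpulso}.
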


If $T^2$ is a $2$-contravariant tensor, we write $T_2$ for the $2$-covariant tensor metrically equivalent to it.

\medskip

\smallskip

\begin{defn}
The mass-energy-momentum distribution of a fluid of charged $p$-branes is represented by a differential $n$-form $\Pi_n$ with values on $TX$, that we call \textbf{impulse form} 
of the fluid. 

The 2-covariant tensor $T_2$ corresponding to $\Pi_n$ via the isomorphism of Lemma \ref{FormasYTensores} is called the \textbf{stress-energy tensor} of the fluid.
\end{defn}

\smallskip

The interpretation of the impulse $n$-form $\Pi_n$ is the following: assume the ambient manifold $X$ is the Minkowski spacetime and let $H$ be an oriented hypersurface. 

If $S$ is the trajectory of a $p$-brane transversally crossing the hypersurface $H$, and $\Pi_S$ is the vector-valued impulse $p$-form of the $p$-brane, then the vector
 $\,\int_{S\cap H}\Pi_S\,$ is said to be the {\it total impulse} of the particle in the hypersurface $H$ (see Example \ref{PrimerMinkowski}).
 
Now, the vector
$$\int_H\Pi_n$$
is understood as {\it the sum of total impulses of all the charged $p$-branes transversally crossing the hypersurface $H$.}\bigskip

Let us consider a fluid of charged $p$-branes, with impulse form $\Pi_n$ and 
charge-current $(p+1)$-vector $J$.

In absence of external forces, the variation of the fluid impulse should be null: $\d_\nabla\Pi_n=0$. If an electromagnetic field  is present we postulate, by analogy with the case of a single $p$-brane,  that the movement of the fluid   satisfies the Force Law:
$$ \d_\nabla \Pi_n \,=\, \d X \otimes \widehat F(J)  \ . $$

In virtue of Lemma \ref{FormasYTensores}, this equation is equivalent to $\mathrm{div}\, T^2 \,=\,  \widehat F(J) $, or to $\mathrm{div}\, T_2\,=\, i_JF\ $. 
Combining it with the  Maxwell equation $\,\partial F=J^*$, we obtain another equivalent formulation:
$$\boxed{\phantom{\frac{1}{2}} \mathrm{div}\, T_2\,=\, i_{\partial F}F\quad}\ .$$

\subsection{Example: Dust of charged $p$-branes}

Let us consider a fluid of charged $p$-branes without pressure and where all the $p$-branes in the fluid have the same tension 
$\mathfrak{t}$ and the same electric charge $\mathfrak{q}$. The general idea is that   each $p$-brane has
approximately the same velocity as the surrounding ones; hence, we give the following definition:

 A \textbf{dust} of $p$-branes is described by an integrable distribution on $X$ of rank  $p+1$, for which each integrable 
submanifold represents the mean trajectory of an infinitesimal portion of $p$-branes.

Let $(D_0, \ldots , D_p)$ be an orthonormal basis $(+, - , \ldots , -)$ of the distribution. Such a basis defines an orientation on 
each integral submanifold. By analogy with the case of a single $p$-brane, the charge-current $(p+1)$-vector of the dust is defined as:
$$ J:= \,\rho_e D_0 \wedge \ldots  \wedge D_p \ , $$ for some \textbf{charge density} function $\rho_e$.

The contravariant stress-energy tensor of the dust is defined by the formula:
$$ T^2 := \rho_m \sum_{j=0}^p \delta_j D_j \otimes D_j $$ where $\delta_0 = 1$, $\delta_{j\neq0} = -1$, and the function $\rho_m := 
(\mathfrak{t}/\mathfrak{q}) \rho_e$ is called the \textbf{mass density} (that is, on each trajectory $S$ of the dust we consider the dual metric $(g_{|S})^* $
multiplied by the function $\rho_m$).

According \ref{FormasYTensores}, the corresponding impulse form is:
$$ \Pi_n \,=\, \rho_m \sum_{j=0}^p (i_{D_j} \d X) \otimes \delta_j D_j \ . $$ 

\smallskip

\begin{prop}
 If the charge conservation law $\d C = 0$ holds, then the impulse form of a charged dust satisfies:
$$ \d_\nabla \Pi_n \,=\, \rho_m \, \d X \otimes \mathrm{tr}\, \Phi $$ or, equivalently,
$$ \mathrm{div}\, T^2 \,=\, \rho_m \, \mathrm{tr}\, \Phi \ , $$ where $\Phi$ is the second fundamental form of the trajectories of the dust.
\end{prop}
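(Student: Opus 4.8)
The plan is to establish the first identity, $\d_\nabla \Pi_n = \rho_m\,\d X \otimes \mathrm{tr}\,\Phi$; the second formula then follows automatically, since Lemma \ref{FormasYTensores} identifies $\d_\nabla \Pi_n$ with $\d X \otimes \mathrm{div}\, T^2$. I would work locally in the adapted orthonormal frame $(D_0, \ldots , D_n)$, with $(D_0, \ldots , D_p)$ spanning the distribution and $(D_{p+1}, \ldots , D_n)$ normal to it, using the explicit expression $\Pi_n = \rho_m \sum_{j=0}^p \delta_j (i_{D_j} \d X) \otimes D_j$ recalled just above, together with the connection $1$-forms $\omega_{ij}$ defined by $\d_\nabla D_j = \sum_i \omega_{ij} \otimes D_i$.

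The central idea is to package the density and the charge conservation hypothesis into the charge-density form $C$. Writing $\d X = \d S \wedge \nu$ with $\d S = \theta_0 \wedge \ldots \wedge \theta_p$ and $\nu = \theta_{p+1} \wedge \ldots \wedge \theta_n$, and observing that $C = \rho_e\,\nu$, that $\rho_m \nu = (\mathfrak{t}/\mathfrak{q})\,C$, and that $i_{D_j}\nu = 0$ for $j \leq p$, one obtains $\rho_m\, i_{D_j} \d X = (\mathfrak{t}/\mathfrak{q})\,(i_{D_j} \d S) \wedge C$. Applying $\d_\nabla$ via the Leibniz rule splits $\d_\nabla \Pi_n$ into a term where $\d$ hits the form factor and a term where it hits the vector factor $D_j$. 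In the first term, $\d\big[(i_{D_j}\d S)\wedge C\big] = \d(i_{D_j}\d S) \wedge C$ precisely because $\d C = 0$; and since $C$ carries all the normal covectors, only the purely tangential component of $\d(i_{D_j}\d S)$ survives the wedge. That component is $(\mathrm{div}_S D_j)\,\d S$, because pullback to the leaf commutes with $\d$ and on the leaf $\d(i_{D_j}\d S) = (\mathrm{div}_S D_j)\,\d S$. Hence the first term reduces to $\rho_m \sum_j \delta_j (\mathrm{div}_S D_j)\,\d X \otimes D_j$.

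The second term I would handle exactly as in Proposition \ref{DiferencialImpulso}: the identity $\beta \wedge i_V \d X = \beta(V)\,\d X$ collapses it to $\rho_m\,\d X \otimes \sum_j \delta_j \nabla_{D_j} D_j$. Collecting both contributions,
$$ \d_\nabla \Pi_n \,=\, \rho_m\,\d X \otimes \sum_{j=0}^p \delta_j \big( (\mathrm{div}_S D_j)\, D_j + \nabla_{D_j} D_j \big) \ , $$
and the bracketed tangential-frame expression is precisely the one already shown, within the proof of Proposition \ref{DiferencialImpulso}, to equal $\mathrm{tr}\,\Phi$: the tangential part cancels by metric compatibility $\delta_i \omega_{ij} = -\delta_j \omega_{ji}$, leaving exactly the normal part $\sum_j \delta_j \Phi(D_j, D_j)$.

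The step I expect to be the main obstacle is the bookkeeping in the first term: justifying that only the purely tangential part of $\d(i_{D_j}\d S)$ contributes after wedging with $C$, identifying that part with $(\mathrm{div}_S D_j)\,\d S$, and keeping all signs consistent. This is exactly where the hypothesis $\d C = 0$ enters, and it is what makes the awkward contributions coming from the variation of the density $\rho_e$ and from the normal connection coefficients vanish all at once, rather than having to be cancelled term by term as in a more direct frame computation.
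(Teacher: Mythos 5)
Your proposal is correct and follows essentially the same route as the paper: the same adapted orthonormal frame, the same key factorization $\rho_m\, i_{D_j}\d X = (\mathfrak{t}/\mathfrak{q})\,(i_{D_j}\d S)\wedge C$ so that the hypothesis $\d C = 0$ kills the density-variation term, and the same reduction of the remaining term to the frame computation of Proposition \ref{DiferencialImpulso}. The only difference is organizational: the paper groups $\Pi_n$ as (tension-one leaf impulse form) $\wedge\, (\rho_m\,\theta_{p+1}\wedge\dots\wedge\theta_n)$ and invokes Proposition \ref{DiferencialImpulso} wholesale on the first factor, whereas you split the Leibniz rule over the (form) $\otimes$ (vector) structure and replay that proposition's internal steps (tangential divergence identification and the metric-compatibility cancellation) inline.
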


\begin{proof} Let us complete the orthonormal basis $(D_0,\dots, D_p)$ of the distribution up to an oriented orthonormal basis $(D_0,\dots, D_n)$ of tangent fields on $X$, and let $(\theta_0,\dots,\theta_n)$ be the corresponding dual basis of $1$-forms. 

The charge-density $q$-form is:
$$C\,=\, i_J\text{d}X\,=\, i_J(\theta_0\wedge\dots\wedge\theta_n)\,=\, \rho_e\,\theta_{p+1}\wedge\dots\wedge\theta_n\ .$$ 

We have:
\begin{align*}
\Pi_n\,&=\, \rho_m\sum_{j=0}^p(i_{D_j}\text{d} X)\otimes \delta_jD_j\,=\, \rho_m\sum_{j=0}^pi_{D_j}(\theta_0\wedge\dots\wedge\theta_p)\wedge(\theta_{p+1}\wedge\dots\wedge\theta_n)\otimes \delta_jD_j \\
&=\, \left(\sum_{j=0}^pi_{D_j}(\theta_0\wedge\dots\wedge\theta_p)\otimes \delta_jD_j\right)\wedge(\rho_m\,\theta_{p+1}\wedge\dots\wedge\theta_n)\ .
\end{align*}
and therefore,
\begin{align*}
\text{d}_\nabla\Pi_n\,&=\,\left(\text{d}_\nabla\sum_{j=0}^pi_{D_j}(\theta_0\wedge\dots\wedge\theta_p)\otimes \delta_jD_j\right)\wedge(\rho_m\,\theta_{p+1}\wedge\dots\wedge\theta_n) \\
&\ +\ (-1)^p\left(\sum_{j=0}^pi_{D_j}(\theta_0\wedge\dots\wedge\theta_p)\otimes \delta_jD_j\right)\wedge\text{d}(\rho_m\,\theta_{p+1}\wedge\dots\wedge\theta_n)\ .
\end{align*}

The second addend is null because
 $\,\text{d}(\rho_m\,\theta_{p+1}\wedge\dots\wedge\theta_n)=\text{d}(\frac{\mathfrak{t}}{\mathfrak{q}}C)=0$. With respect to the first one, the term which is differentiated has the same expression than the impulse form of each integral submanifold (considered as the trajectory $S$ of a $p$-brane of tension $1$). 
 Applying Proposition \ref{DiferencialImpulso}, we obtain:
\begin{align*}
\text{d}_\nabla\Pi_n\,&=\,\left(\text{d}_\nabla\sum_{j=0}^pi_{D_j}(\theta_0\wedge\dots\wedge\theta_p)\otimes \delta_jD_j\right)\wedge(\rho_m\,\theta_{p+1}\wedge\dots\wedge\theta_n) \\
&=\, \left(\theta_0\wedge\dots\wedge\theta_p\otimes\text{tr}\,\Phi\right)\wedge(\rho_m\,\theta_{p+1}\wedge\dots\wedge\theta_n)\,=\,\rho_m\text{d}X\otimes\text{tr}\,\Phi\quad .\end{align*} 
\end{proof}

\smallskip

As a consequence, if the electromagnetic field ${F}$ satisfies the Maxwell equations (so, in particular, 
the charge conservation law holds), then, for a dust, the force law $\,\mathrm{div}\, T^2 = \widehat{F} (J)\,$ is equivalent to:
$$\rho_m \, \mathrm{tr}\, \Phi \,=\, \widehat{F} (J)  \ . $$


\subsection{Electromagnetic energy tensor}

\begin{defn} Let $F$ be an electromagnetic field. Its \textbf{electromagnetic energy tensor} is the 2-covariant tensor $T_{\text{elm}}$ associated to the $(p+2)$-differential form $F$ according to Definition \ref{energytensor}.
\end{defn}

Let $F$ be the electromagnetic field produced by a fluid of charged $p$-branes with stress-energy tensor $T_{\text{m}}$. The Lorentz force law $\,\text{div}\, T_\text{m}=i_{\partial F}F\,$ and Proposition \ref{divergence} produce an infinitesimal conservation law:
$$\mathrm{div}\,(T_\text{m}+T_\text{elm})\,=\, i_{\partial F}F+(-i_{\partial F}F)\,=\, 0\ .$$

Indeed, this property is the main motivation for the definition of the electromagnetic energy tensor (see the Introduction).\smallskip

Finally, as in the particle case, we postulate that the electromagnetic energy has a gravitational effect through the Einstein equation. 

To sum up, a \textbf{fluid of charged $p$-branes} is described by four tensor fields on spacetime: A stress-energy tensor $T_m$ and a charge-current $(p+1)$-vector $J$ representing the distributions of mass and charge, and a differential $(p+2)$-form $F$ and its energy tensor $T_{elm}$, representing the electromagnetic field and its electromagnetic energy.

They are related by the following equations:
\begin{align*}
 & \textbf{Maxwell equations:} \qquad  \phantom{\frac{2}{1}} \d F \,=\, 0 \quad , \quad \partial F\, = \, J^*  \  \\
 & \textbf{Einstein equation:} \hskip 1.5cm  Ricci(g)-\frac{r(g)}{2}\, g \, =\, T_m+T_\text{elm}\quad \ .
\end{align*}




\section*{Acknowledgements}

The authors acknowledge the referee for pointing out reference \cite{Henneaux}.


\begin{thebibliography}{60}

\bibitem{Anderson} {\sc Anderson, I.M.}: On the characterization of energy-momentum tensors, 
{\it Gen. Relativity Gravitation}, \textbf{10}, 461--466, (1979).

\bibitem{Jaime} {\sc Castrill\'{o}n, M., Mu\~{n}oz, J.}:  Gauge-invariant characterization of Yang-Mills-Higgs lagrangians,
{\it Ann. Henri Poincar\'{e}}, \textbf{8}, 203--217 (2007).

\bibitem{Epstein} {\sc Epstein,, D.B.A.}: Natural tensors on riemannian manifolds, {\it J. Differential Geom.} \textbf{10}, 631-645 (1975).

\bibitem{Gilkey} {\sc Gilkey, P. B.}: Curvature and the eigenvalues of the Laplacian for elliptic complexes, {\it Adv. Math.} \textbf{10}, 344-382 (1973).

\bibitem{Henneaux} {\sc Henneaux, M., Teitelboim, C.}: $p$-Form Electrodynamics, {\it Found. Phys.}, \textbf{16}, 7, 593-617, (1986).

\bibitem{Kerrighan1} {\sc Kerrighan, D.B.}: Arbitrary Tensor Concomitants of a Bivector  and a Metric in a Space-Time Manifold, {\it Gen. Relativity Gravitation} \textbf{13}, 19-27, (1981).

\bibitem{Kerrighan2} {\sc Kerrighan, D.B.}: On the uniqueness of the energy-momentum tensor for electromagnetism, {\it J. Math. Phys.} \textbf{23} (10), 1979-1980 (1982).

\bibitem{Kolar} {\sc Kol\'{a}r, I., Michor, P.W., Slov\'{a}k, J.}:  Natural Operations in Differential Geometry, Springer-Verlag, Berlin, (1993).

\bibitem{Lovelock} {\sc Lovelock, D.}: The uniqueness of the Einstein-Maxwell field equations, {\it Gen. Relativity Gravitation},
\textbf{5}, 399--408, (1974).

\bibitem{Lovelock2} {\sc Lovelock, D.}: Bivector field theories, divergence-free vectors and the Einstein-Maxwell filed equations, 
{\it J. Math. Phys.}, 
\textbf{18}, 1491--1498, (1977).


\bibitem{LovelockAlberto} {\sc Navarro, A., Navarro, J.:} Lovelock's theorem revisited, {\it J. Geom. Phys.}, \textbf{61}, 1950--1956 (2011).

\bibitem{Einstein} {\sc Navarro, J., Sancho, J. B.}: On the naturalness of Einstein equation, {\it J. Geom. Phys.}, \textbf{57}, 1007--1014 (2008).

\bibitem{Superenergy} {\sc Senovilla, J.M.M.}: Superenergy tensors, {\it Classical Quantum Gravity}, \textbf{17}, 2799-2841, (2000).

\bibitem{Peetre} {\sc Slov\'{a}k, J.}: Peetre theorem for nonlinear operators, {\it Ann. Global Anal. Geom.}, \textbf{6}, 273--283, 
(1988).

\bibitem{SlovakJournal} {\sc Slov\'{a}k, J.}: On invariant operations on a manifold with connection or metric, {\it  J. Differential Geom.}, \textbf{36}, 633--650, (1992).

\bibitem{SlovakSemi} {\sc Slov\'{a}k, J.}: On invariant operations on pseudo-riemannian manifolds, {\it Comment. Math. Univ.
Carolin.}, \textbf{33}, 2, 269--276, (1992).


\bibitem{Stredder} {\sc Stredder, P.}: Natural differential operators on riemannian manifolds and
representations of the orthogonal and special orthogonal groups, {\it  J. Differential Geom.},  \textbf{10}, 647--660, (1975).


\bibitem{Taylor} {\sc Taylor, M.E.}: Partial Differential Equations III, Nonlinear equations, Springer (1997).




\end{thebibliography}
\end{document}